\def\minwrt[#1]{\underset{#1}{\text{minimize }}}
\def\argminwrt[#1]{\underset{#1}{\text{arg min }}}
\def\maxwrt[#1]{\underset{#1}{\text{maximize }}}
\def\maxemphwrt[#1]{\underset{#1}{\text{\emph{maximize} }}}
\def\supwrt[#1]{\underset{#1}{\text{sup }}}
\newtheorem{theorem}{Theorem}
\newtheorem{remark}{Remark}
\newtheorem{proposition}{Proposition}
\newcommand{\costf}{c}
\newcommand{\tr}{{\rm tr}}
\newcommand{\indfunc}[1]{\mathbbm{1}_{\left\{#1\right\}}}
\newcommand{\norm}[1]{\left\lVert#1\right\rVert}
\newcommand{\abs}[1]{\left|#1\right|}
\newcommand{\trace}[1]{\text{tr}\left(#1\right)}
\def\ba{{\bf a}}
\def\bI{{\bf I}}
\def\bx{{\bf x}}
\def\bz{{\bf z}}
\def\bR{{\bf R}}
\def\bQ{{\bf Q}}
\def\bT{{\bf T}}
\def\bU{{\bf U}}
\def\bX{{\bf X}}
\def\bY{{\bf Y}}
\def\ccM{{\mathcal{M}}}
\def\ccI{{\mathcal{I}}}
\def\RE{{\mathbb{E}}}
\def\RM{{\mathbb{M}}}
\def\RN{{\mathbb{N}}}
\def\RR{{\mathbb{R}}}
\def\RT{{\mathbb{T}}}
\def\RZ{{\mathbb{Z}}}
\newcommand{\bLambda}{\boldsymbol{\Lambda}}
\newcommand{\Omtspec}{S}
\newcommand{\Omtcov}{T}
\newcommand{\costfunc}{c}
        \def\fps@eqnfloat{!t}
        \def\ftype@eqnfloat{4}
        \newenvironment{eqnfloat*}
               {\@dblfloat{eqnfloat}}
               {\end@dblfloat}
\title{Interpolation and Extrapolation of Toeplitz Matrices via Optimal Mass Transport}
\author{Filip Elvander, \textit{Student Member, IEEE}, Andreas Jakobsson, \textit{Senior Member, IEEE},\\and Johan Karlsson, \textit{Senior Member, IEEE}\thanks{This work was supported in part by the Swedish Research Council.}\thanks{Parts of the material herein have been published at the IEEE International Conference on Acoustics, Speech and Signal Processing, Calgary, Canada, April 2018.}
\thanks{F.~Elvander and A.~Jakobsson are with the Center for Mathematical Sciences,
 Lund University, SE-22100 Lund, Sweden (email:~filip.elvander@matstat.lu.se, aj@maths.lth.se). J.~Karlsson is with the Department of Mathematics, KTH Royal Institute of Technology, Stockholm, Sweden (email:~johan.karlsson@math.kth.se).}}
\begin{document}
\maketitle

\begin{abstract}
In this work, we propose a novel method for quantifying distances between Toeplitz structured covariance matrices. By exploiting the spectral representation of Toeplitz matrices, the proposed distance measure is defined based on an optimal mass transport problem in the spectral domain. This may then be interpreted in the covariance domain, suggesting a natural way of interpolating and extrapolating Toeplitz matrices, such that the positive semi-definiteness and the Toeplitz structure of these matrices are preserved. The proposed distance measure is also shown to be contractive with respect to both additive and multiplicative noise, and thereby allows for a quantification of the decreased distance between signals when these are corrupted by noise. Finally, we illustrate how this approach can be used for several applications in signal processing. In particular, we consider interpolation and extrapolation of Toeplitz matrices, as well as clustering problems and tracking of slowly varying stochastic processes.
\end{abstract}
{\bf Keywords}
Covariance interpolation, Optimal mass transport, Toeplitz matrices, Spectral estimation

\section{Introduction}
%
Statistical modeling is a key methodology for estimation and identification and is used throughout the signal processing field. An intrinsic component of such models is covariance estimates, which are extensively used in application areas such as spectral estimation, radar, and sonar \cite{Burg67, StoicaM05}, wireless channel estimation, medical imaging, and identification of systems and network structures \cite{Dempster72_28, ChandrasekaranPW12_40}. Although being a classical subject (see, e.g., \cite{Kshirsagar72}), covariance estimation has recently received considerable attention. Such contributions include works on finding robust covariance estimates with respect to outliers, as well as methods suitable for handling different distribution assumptions, including families of non-Gaussian distributions  \cite{ZoubirKCM12_29,OllilaTKP12_60,OllilaT14_62, Wiesel12_60, SunBP16_64}.
Another important class of problems is covariance estimation with an inherent geometry that gives rise to a structured covariance matrix. 
Such structures could arise from stationarity assumptions of the underlying object  
\cite{BurgLW82_70, ByrnesGL00_48, LiSL99_47, Georgiou03_book, KarlssonE12_mtns} or be due to assumptions in, e.g., the underlying network structures in graphical models \cite{AvventiLB13_58,ZorziS16_61}.
In this work, we focus on Toeplitz structures which naturally arise when modeling stationary signals and processes.

Although many methods rely on stationarity for modeling signals, such assumptions are typically not valid over longer time horizons. Therefore, tools for interpolation and morphing of covariance matrices are important for modeling and fusing information. A straightforward and often used approach is the Euclidean metric; however, this metric does not take into account the underlying geometry and typically results in fade-in-fade-out effects (as is also illustrated herein). Several other such tools for interpolating covariances have recently been proposed in the literature, for example methods based on g-convexity  \cite{Wiesel12_60}, optimal mass transport \cite{YamamotoCNGT16_arxiv}, and information geometry \cite{Barbaresco13}. An alternative approach for such interpolation is to relax, or "lift'',  the covariances and instead consider interpolation between the lifted objects. For example, in \cite{KnottS84_43} (see also \cite{NingJG13_20}), 
interpolation between covariance matrices is induced from the optimal mass transport geodesics between the Gaussian 
density functions with the corresponding covariances. However, neither of these interpolation approaches take into account that the covariance matrix represents an (almost) stationary times series and do not preserve the Toeplitz structure of the interpolating covariance sequence.

The topic of optimal mass transport (see, e.g., \cite{Villani08,KolouriPTSR17_34}) was originally introduced in order to address the problem of, in a cost efficient way, supplying construction sites with building material and has been used in many contexts, such as, e.g., economics and resources allocation. Lately, it has also gained interest in application fields such as image processing \cite{HakerZTA04_60, DeGoesCSAD11_30} signal processing \cite{EngquistF14_12, GeorgiouKT09_57, JiangLG12_60}, computer vision and machine learning \cite{RubnerTG00_40, LingO07_29, RabinPDB11_conf, ArjovskyCB17_arXiv,AdlerROK17_arxiv}. In this work, we will utilize optimal mass transport to model changes in the covariance structure of stochastic processes, or signals. To this end, we propose a new lifting approach, where the lifting is made from the covariance domain to the frequency domain, using the fact that any positive semi-definite Toeplitz matrix has a spectral representation. 
We combine this approach with the frequency domain metric based on optimal mass transport, proposed in \cite{GeorgiouKT09_57}, in order to define pairwise distances between Toeplitz matrices.
This is done by considering the minimum distance, in the optimal mass transport sense, between the sets of power spectra consistent with each of the Toeplitz matrices. The proposed distance measure is shown to be contractive with respect to additive and multiplicative noise, i.e., it reflects the increased difficulty of discriminating between two stochastic processes if these are corrupted by two realizations of a noise process. 
Also, we show that the proposed distance measure gives rise to a natural way of interpolating and extrapolating Toeplitz matrices. The interpolation method preserves the Toeplitz structure, the positive semi-definiteness, as well as the diagonal of the interpolating/extrapolating matrices.

The proposed optimal mass transport problem is in its original form an infinite-dimensional problem. As an alternative to finding solutions using approximations based on discretizations of the underlying space, we show that certain formulations of the problem allows for approximations by a semi-definite program using a sum-of-squares representation. Also, we illustrate how the method can be used for interpolation, extrapolation, tracking, and clustering.  

This paper is organized as follows. In Section~\ref{sec:background}, we provide a brief background on the moment problem, i.e., determining the power spectrum from a partial covariance sequence or finite covariance matrix, as well as introduce the problem of optimal mass transport. Section~\ref{seq:distance_notion} introduces the proposed distance notion for positive semi-definite Toeplitz covariance matrices. Here, the dual problem is derived, and properties of the proposed distance notion are described. In Section~\ref{sec:applications}, we describe applications of the proposed distance notion, such as induced interpolation, extrapolation, tracking, and clustering. Section~\ref{sec:sos} formulates a sum-of-squares relaxation of the dual problem. Section~\ref{sec:num} provides numerical illustrations of the proposed distance notion, as well as the described applications. Finally, Section~\ref{sec:conclusions} concludes upon the work.
%

%
\subsection*{Notation}
%
Let $\RM^{n}$ denote the set of Hermitian $n\times n$ matrices and let  $(\cdot)^T$ denote the transpose, $(\cdot)^H$ the Hermitian transpose, and $\overline{(\cdot)}$ the complex conjugate.
Let $\RT=(-\pi,\pi]$ and let $C_{\rm perio}(\RT)$ denote the set of continuous and $2\pi$-periodic functions on $\RT$. The set of linear bounded functionals on $C_{\rm perio}(\RT)$, or equivalently, the dual space of $C_{\rm perio}(\RT)$, is the set of generalized integrable functions on the set $\RT$, here denoted by $\ccM(\RT)$. Thus, $\ccM(\RT)$ includes, e.g., functions containing singular parts such as Dirac delta functions \cite{Luenberger69}.%
\footnote{Strictly speaking $\ccM(\RT)$ is the set of signed bounded measures on $\RT$ \cite{Friedman70}.}
Further, we
let $\ccM_+(\RT)$ denote the subset of such functions that are non-negative.
We use $\langle \Phi, f \rangle$ to denote the application of the functional $\Phi$ on $f$, e.g., 
\begin{align*}
	\langle \Phi, f \rangle = \int_{\RT} f(\theta) \Phi(\theta)  d\theta
\end{align*}
if $f\in C_{\rm perio}(\RT)$ and $\Phi\in \ccM(\RT)$. For Hilbert spaces, $\langle \cdot, \cdot \rangle$ is the standard inner product, e.g., when $\bX$ and $\bY$ are vectors or matrices, then  
$\langle \bX,\bY\rangle=\tr(\bX\bY^H)$, where $\tr(\cdot)$ denotes the trace. We denote matrices by boldface upper-case letters, such as $\bX$, whereas vectors are denoted by boldface lower-case letters, such as $\bx$.
Furthermore, $\norm{\bX}_F = \sqrt{\langle \bX,\bX\rangle}$ denotes the Frobenius norm induced by the matrix inner product. Lastly, for $f \in C_{\rm perio}(\RT)$, we let
\begin{align*}
	\norm{f}_1 = \supwrt[\abs{\Phi(\theta)}\leq 1] \langle \Phi, f \rangle =  \int_\RT \abs{f(\theta)}d\theta
\end{align*}
denote the $L_1$-norm.
%

%
%
%
%
\section{Background}\label{sec:background}
\subsection{Stochastic processes and spectral representations} \label{ssec:background_stoch}
We will in this work consider complex-valued discrete time stochastic processes, or signals, $y(t)$ for $t \in \RZ$. These will be assumed to be zero mean and wide sense stationary (WSS), i.e., $\RE (y(t))=0$  for all $t\in \RZ$, and the covariance 
\begin{equation} \label{eq:r_k}
r_k \triangleq \RE (y(t)\overline{y(t-k)})
\end{equation}
being independent of $t$. Here, $\RE(\cdot)$ denotes the expectation operator. The frequency content of the process $y(t)$ may then be represented by the power spectrum, $\Phi$, i.e., the non-negative function on $\RT$ whose Fourier coefficients coincide with the covariances:
\begin{equation}\label{eq:moments}
r_k=\frac{1}{2\pi}\int_{-\pi}^\pi \Phi(\theta)e^{-ik\theta} d\theta
\end{equation}
for $k\in \RZ$  (see, e.g.,  \cite[Chapter 2]{JohnsonD93}). Typically in spectral estimation, one considers the inverse problem of recovering the power spectrum $\Phi$ from a given set of covariances $r_k$, for $k \in \RZ,$ with $\abs{k}\leq n-1$. The condition for any such reconstruction to be valid is that $\Phi$ should be consistent with the covariance sequence $\left\{ r_k \right\}_{\abs{k}\leq n-1}$, i.e., \eqref{eq:moments} should hold for $\abs{k}\leq n-1$. The corresponding $n\times n$ covariance matrix is

\begin{align}\label{eq:Toeplitz}
\bR=\begin{pmatrix}
r_0&r_{-1}& r_{-2}&\cdots &r_{-n+1}\\
r_1&r_0& r_{-1}&\cdots &r_{-n+2}\\
r_2&r_1& r_{0}&\cdots &r_{-n+3}\\
\vdots &\vdots& \vdots&\ddots&\vdots\\
r_{n-1}& r_{n-2}&r_{n-3}&\cdots& r_{0}
\end{pmatrix}
\end{align}
which is a Hermitian Toeplitz matrix, since $y(t)$ is WSS. Thus, expressed in the form of matrices, a spectrum is consistent with an observed partial covariance sequence, or, equivalently, a finite covariance matrix, if $\Gamma(\Phi) = \bR$,
where \mbox{$\Gamma:\ccM(\RT) \to \RM^{n}$} is the linear operator 
\begin{align}
	\Gamma(\Phi)\triangleq\frac{1}{2\pi}\int_{\RT} \ba(\theta)\Phi(\theta) \ba(\theta)^Hd\theta
\end{align}
and
\begin{align}
	\ba(\theta) \triangleq \left[\begin{array}{cccc} 1 & e^{i\theta} & \cdots &  e^{i(n-1)\theta} \end{array} \right]^T
\end{align}
is the Fourier vector. Note that $\Gamma(\Phi)$ is a Toeplitz matrix, since $\ba(\theta)\ba(\theta)^H$ is Toeplitz for any $\theta$.
It may be noted that for any positive semi-definite Toeplitz matrix, $\bR$, there always exists at least one consistent power spectrum; in fact, if $\bR$ is positive definite, there is an infinite family of consistent power spectra
 \cite{GrenanderS58}. It may be noted that for singular Toeplitz covariance matrices, the spectral representation is unique. This fact has recently been successfully utilized in atomic norm minimization problems for grid-less compressed sensing of sinusoidal signals (see, e.g., \cite{BhaskarTR13_61,ElvanderSJ18_145}). In this work, we are mainly interested in the non-singular case, where several power spectra are consistent with given covariance matrices. In Section \ref{seq:distance_notion}, we will utilize such spectral representations in order to define a notion of distance between pairs of Toeplitz matrices. This distance will be defined in terms of the minimum optimal mass transport cost between the sets of power spectra consistent with the matrices.
 %
%
\subsection{Optimal mass transport}
The Monge-Kantorovich problem of optimal mass transport is the problem of finding an optimal transport plan between two given mass distributions \cite{Villani08,KolouriPTSR17_34}. The cost of moving a unit mass is defined on the underlying space, and the optimal transport plan is defined as the plan minimizing the total cost.
The resulting minimal cost, associated with the optimal transport plan, can then be used as a measure of similarity, or distance, between the two mass distributions. 
The idea of utilizing the optimal mass transport cost as a distance measure has been used, e.g., for defining metrics on the space of power spectra \cite{GeorgiouKT09_57}, whereas the optimal transport plan has been used for tracking stochastic processes with smoothly varying spectral content and for spectral morphing for speech signals \cite{JiangLG12_60}. Recently, the interpretation of the optimal transport plan as providing an optimal association between elements in two mass distributions has been used as a means of clustering in fundamental frequency estimation algorithms \cite{ElvanderAKJ17_icassp}.
One of the advantages of using the optimal mass transport as a distance compared to traditional metrics is that it naturally incorporates the geometry of the underlying space. In particular, the optimal mass transport cost between two objects depends on the distance between the two objects in the underlying space, whereas standard metrics only depend on the overlapping regions. Furthermore, interpolation using optimal mass transport results in smooth transitions in the underlying space (see Sections \ref{ex:DOA} and \ref{ex:tracking}). This makes optimal mass transport suitable for applications where there is a smooth transition in the underlying space, e.g., tracking problems in radar and sonar.

As in \cite{GeorgiouKT09_57}, we consider the following distance between two spectra $\Phi_0$ and $\Phi_1$:
\begin{subequations}\label{eq:omt}
\begin{align} 
\Omtspec(\Phi_0,\Phi_1)\;\triangleq\min_{M\in \ccM_+(\RT^2)}\qquad & \int_{\RT^2} \costfunc(\theta,\varphi)M(\theta,\varphi)d\theta d\varphi\label{eq:omt_a}\\
\mbox{ subject to }\quad &\Phi_0(\theta)=\int_\RT M(\theta,\varphi)d\varphi\label{eq:omt_b}\\
&\Phi_1(\varphi)=\int_\RT M(\theta,\varphi)d\theta\label{eq:omt_c}
\end{align}
\end{subequations}
where $\RT^2 = \RT\times\RT$ denotes the 2-D frequency space. Here, the cost function, $\costfunc(\theta,\varphi)$, details the cost of moving one unit of mass between the frequencies $\theta$ and $\varphi$.
 The transport plan, $M(\theta,\varphi)$, specifies the amount of mass moved from frequency $\theta$ to frequency $\varphi$. The objective in \eqref{eq:omt_a} is the total cost associated with the transport plan $M$ and the constraints \eqref{eq:omt_b} and \eqref{eq:omt_c} ensure that $M$ is a valid transport plan from  $\Phi_0$ to $\Phi_1$, i.e., the integration marginals of $M$ coincide with the spectra $\Phi_0$ and $\Phi_1$. It may be noted that, due to these marginal constraints, the distance measure $\Omtspec$ is only defined for spectra of the same mass, or total power. However, $\Omtspec$ may be generalized in order to allow for mass differences by including a cost for adding and subtracting mass by postulating that the spectra $\Phi_0$ and $\Phi_1$  are perturbations of functions $\Psi_0$ and $\Psi_1$ that have equal mass. As in \cite{GeorgiouKT09_57}, this may be formulated as
\begin{align}\label{eq:Skappa}
\Omtspec_\kappa(\Phi_0,\Phi_1)=\min_{\Psi_j\in \ccM_+(\RT)}\; & \Omtspec(\Psi_0,\Psi_1)+\kappa \sum_{j=0}^1\norm{\Phi_j - \Psi_j}_1
\end{align}
where $\kappa>0$ is a used-defined parameter detailing the cost of adding or subtracting mass. 
One interpretation of this is that points that are close represent the same object and can thus be transported via the first term in \eqref{eq:Skappa}, whereas points that are far apart represent different objects and must be phased in/out using the second term in \eqref{eq:Skappa}. Then, $\kappa$ may be interpreted as a parameter determining when two points are close.
If the cost function in the optimal mass transport problem in \eqref{eq:omt} is chosen as $\costfunc(\theta,\phi)=d(\theta,\phi)^p$, $p\geq 1$ for any metric $d(\theta,\phi)$ on $\RT$, then $W(\Phi_0,\Phi_1)=\Omtspec(\Phi_0,\Phi_1)^{1/p}$ is the so-called Wasserstein metric on $\ccM_+(\RT)$. Similarly, for $S_\kappa$, the following theorem holds.
\begin{theorem}[\!\!\cite{GeorgiouKT09_57}]
Let $p\ge 1$, $\kappa>0$, and let the cost function be $\costfunc(\theta,\phi)=|\theta-\phi|^p$. Then,  
\begin{equation}
\label{eq:Wasserstein}
W_\kappa(\Phi_0,\Phi_1)=S_\kappa(\Phi_0,\Phi_1)^{1/p}
\end{equation}
is a metric on $\ccM_+(\RT)$.
\end{theorem}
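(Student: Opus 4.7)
The four metric axioms for $W_\kappa$ on $\ccM_+(\RT)$ must be checked. Non-negativity is immediate: $S_\kappa$ is the sum of non-negative transport and $L^1$ penalty terms. Symmetry follows because the cost $\costfunc(\theta,\varphi)=|\theta-\varphi|^p$ is symmetric, so any admissible transport plan $M(\theta,\varphi)$ between $\Psi_0$ and $\Psi_1$ transposes to a plan of the same cost between $\Psi_1$ and $\Psi_0$, while the two penalty terms are manifestly symmetric in the two arguments. For the identity of indiscernibles, $W_\kappa(\Phi,\Phi)=0$ is witnessed by $\Psi_0=\Psi_1=\Phi$ together with a diagonally supported plan; conversely, if $W_\kappa(\Phi_0,\Phi_1)=0$ then $\kappa>0$ forces both penalties to vanish at the optimizer, so $\Psi_j=\Phi_j$, and then vanishing $S$-cost together with the known Wasserstein metric property on the equal-mass slice gives $\Phi_0=\Phi_1$.

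The triangle inequality is the main obstacle. My plan is to take optimal (or $\epsilon$-optimal) triples $(\Psi_0,\Psi_1^a,M^a)$ and $(\Psi_1^b,\Psi_2,M^b)$ attaining $S_\kappa(\Phi_0,\Phi_1)$ and $S_\kappa(\Phi_1,\Phi_2)$, respectively. Since in general $\Psi_1^a\neq\Psi_1^b$, the two plans cannot be glued directly, so I would isolate the common intermediate mass $\mu=\min(\Psi_1^a,\Psi_1^b)$, restrict $M^a$ to the mass whose second marginal lies under $\mu$ (producing a plan from some $\Psi_0^{\rm eff}\le\Psi_0$ to $\mu$) and symmetrically restrict $M^b$ (yielding a plan from $\mu$ to $\Psi_2^{\rm eff}\le\Psi_2$). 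The standard disintegration/gluing lemma through $\mu$ then assembles these into a single plan $\tilde M$ from $\Psi_0^{\rm eff}$ to $\Psi_2^{\rm eff}$, and the triple $(\Psi_0^{\rm eff},\Psi_2^{\rm eff},\tilde M)$ is the candidate feasible point for $S_\kappa(\Phi_0,\Phi_2)$.

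The estimate combines two ingredients. Minkowski's inequality applied fibre-wise through $\mu$ produces the essential $1/p$-power combination
\begin{equation*}
\Bigl(\int_{\RT^2}|\theta-\psi|^p \tilde M\Bigr)^{1/p} \le \Bigl(\int|\theta-\varphi|^p M^a\Bigr)^{1/p} + \Bigl(\int|\varphi-\psi|^p M^b\Bigr)^{1/p},
\end{equation*}
while the mass discrepancies satisfy $\norm{\Psi_0-\Psi_0^{\rm eff}}_1\le\norm{\Psi_1^a-\mu}_1$ and $\norm{\Psi_2-\Psi_2^{\rm eff}}_1\le\norm{\Psi_1^b-\mu}_1$ by the marginal relations, and in turn are bounded by $\norm{\Phi_1-\Psi_1^a}_1$ and $\norm{\Phi_1-\Psi_1^b}_1$ using $\Psi_1^a-\mu=(\Psi_1^a-\Psi_1^b)_+$ and the $L^1$ triangle inequality. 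This controls $\norm{\Phi_j-\Psi_j^{\rm eff}}_1$ by the penalties appearing in the two original optima. A final Minkowski-type combination of the transport and penalty pieces, using $p\ge 1$, delivers $S_\kappa(\Phi_0,\Phi_2)^{1/p}\le S_\kappa(\Phi_0,\Phi_1)^{1/p}+S_\kappa(\Phi_1,\Phi_2)^{1/p}$, after which $\epsilon\to 0$ completes the argument.

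The most delicate step will be the bookkeeping in the gluing: precisely matching the mass lost in passing to $\mu$ against the correct penalty contributions of the combined triple, and assembling the transport and $L^1$ pieces into a clean $p$-th-root inequality rather than a bound polluted by extra constants. A tempting alternative would be to recast $S_\kappa$ as a standard Wasserstein-type transport cost on the one-point extension $\RT\cup\{\infty\}$ with $\costfunc(\theta,\infty)=\kappa$, but this only directly yields a metric when $\kappa$ is large enough for the induced $\costfunc^{1/p}$ to satisfy its own triangle inequality, so for general $\kappa>0$ the direct gluing route above seems necessary.
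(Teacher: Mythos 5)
The paper does not actually prove this theorem: it is imported verbatim from \cite{GeorgiouKT09_57} and used as a black box, so there is no internal proof to compare your attempt against. Judged on its own, your plan is sound and follows what is essentially the standard route for such "unbalanced" transport metrics. The easy axioms are handled correctly (for the identity of indiscernibles, note that it is simply non-negativity of all three terms, not $\kappa>0$ per se, that forces the penalties to vanish at an optimizer; $\kappa>0$ is only needed so that the penalty is genuinely present). The gluing through $\mu=\min(\Psi_1^a,\Psi_1^b)$ with rescaled sub-plans is the right device, and your penalty bookkeeping closes: $\norm{\Psi_1^a-\mu}_1+\norm{\Psi_1^b-\mu}_1=\norm{\Psi_1^a-\Psi_1^b}_1\le\norm{\Phi_1-\Psi_1^a}_1+\norm{\Phi_1-\Psi_1^b}_1$, so the glued candidate's total penalty is dominated by the sum of the four original penalties. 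The one step you should spell out is the "final Minkowski-type combination": naively adding the transport bound (which is in $1/p$-power form) to the penalty bound (which is in linear form) does not directly give the $1/p$-root triangle inequality. What works is the two-component $\ell^p$ Minkowski inequality applied to the vectors $\bigl((\int \costfunc\, M^a)^{1/p},(P^a)^{1/p}\bigr)$ and $\bigl((\int \costfunc\, M^b)^{1/p},(P^b)^{1/p}\bigr)$, combined with $(x^{1/p}+y^{1/p})^p\ge x+y$; once stated this way the estimate is clean. Your closing remark about the one-point-extension alternative failing for small $\kappa$ is also correct and shows you chose the right route.
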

Consider a situation where we need to discriminate between two
signals on the basis of their statistics or of their power spectra. In
such cases, additive or multiplicative noise typically impede our
ability to differentiate between the two. In particular,  we have that $\Phi \mapsto \Phi+\Phi_{\rm a}$ represents the operation of adding independent noise with spectrum $\Phi_{\rm a}$ and $\Phi \mapsto \Phi*\Phi_{\rm m}$ represents the operation of multiplying the signal with independent noise with spectrum $\Phi_{\rm m}$.
This was considered in \cite{GeorgiouKT09_57} and it was shown that the transportation distance respects this property in the sense that corrupting two signals with additive and (normalized) multiplicative noise decreases their transportation distance. Specifically, the following theorem holds.
\begin{theorem}[\!\!\cite{GeorgiouKT09_57}] \label{theorem:georgiou_contractive}
Let $p\ge 1$, $\kappa>0$, and let the cost function be $\costfunc(\theta,\phi)=|\theta-\phi|^p$, and let $W_\kappa(\Phi_0,\Phi_1)$ be defined by \eqref{eq:Wasserstein}.
Then, $W_\kappa(\Phi_0,\Phi_1)$ is contractive with respect to the additive and normalized multiplicative noise, i.e.,
\begin{itemize}
\setlength{\itemsep}{1pt}
\setlength{\parskip}{1pt}
\item $W_\kappa(\Phi_0+\Phi_{\rm a},\Phi_1+\Phi_{\rm a})\le W_\kappa(\Phi_0,\Phi_1)$
\item $W_\kappa(\Phi_0*\Phi_{\rm m},\Phi_1*\Phi_{\rm m})\le W_\kappa(\Phi_0,\Phi_1)$,
\end{itemize}
for any $\Phi_{\rm a}, \Phi_{\rm m}\in \ccM_+(\RT)$, with $\int_\RT \Phi_{\rm m}(\theta)d\theta=1$.
\end{theorem}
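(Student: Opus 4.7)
The plan is to prove each inequality at the level of $S_\kappa$, and then deduce the statement for $W_\kappa$ via monotonicity of $x \mapsto x^{1/p}$ on $[0,\infty)$. In both cases the strategy is the same: take an optimum $(\Psi_0^\star, \Psi_1^\star, M^\star)$ for $S_\kappa(\Phi_0,\Phi_1)$, construct an explicit (generally suboptimal) feasible triple for the perturbed problem, and show that its objective value does not exceed the value achieved by the original optimum.

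For the additive part, let $\tilde\Psi_j = \Psi_j^\star + \Phi_{\rm a}$ for $j=0,1$, and consider the candidate transport plan $\tilde M = M^\star + \Phi_{\rm a}(\theta)\,\delta(\theta-\phi)$, that is, the original plan augmented by an identity plan along the diagonal that leaves the added noise mass in place. Since $c(\theta,\theta) = 0$, the diagonal piece contributes zero transport cost, so $S(\tilde\Psi_0,\tilde\Psi_1) \le S(\Psi_0^\star, \Psi_1^\star)$. The mass-penalty term is also preserved because $(\Phi_j + \Phi_{\rm a}) - \tilde\Psi_j = \Phi_j - \Psi_j^\star$, and consequently $\norm{\Phi_j + \Phi_{\rm a} - \tilde\Psi_j}_1 = \norm{\Phi_j - \Psi_j^\star}_1$. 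Summing the two bounds yields $S_\kappa(\Phi_0 + \Phi_{\rm a}, \Phi_1 + \Phi_{\rm a}) \le S_\kappa(\Phi_0, \Phi_1)$.

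For the multiplicative (convolutional) part, I would exploit the translation invariance of $c(\theta,\phi) = |\theta-\phi|^p$. Set $\tilde\Psi_j = \Psi_j^\star * \Phi_{\rm m}$ and define
\begin{align*}
\tilde M(\theta,\phi) = \int_\RT M^\star(\theta-\tau, \phi-\tau)\,\Phi_{\rm m}(\tau)\,d\tau.
\end{align*}
A Fubini interchange verifies that the marginals of $\tilde M$ are exactly $\tilde\Psi_0$ and $\tilde\Psi_1$, while the change of variables $(\theta,\phi) \mapsto (\theta+\tau, \phi+\tau)$ together with translation invariance of $c$ gives
\begin{align*}
\int_{\RT^2} c(\theta,\phi)\tilde M(\theta,\phi)\,d\theta d\phi = \Big(\int_\RT \Phi_{\rm m}(\tau)\,d\tau\Big)\int_{\RT^2} c(\theta',\phi') M^\star(\theta',\phi')\,d\theta' d\phi',
\end{align*}
which equals the cost of $M^\star$ since $\Phi_{\rm m}$ is normalized to unit mass. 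For the penalty, Young's inequality on $\RT$ yields $\norm{(\Phi_j - \Psi_j^\star)*\Phi_{\rm m}}_1 \le \norm{\Phi_j - \Psi_j^\star}_1 \norm{\Phi_{\rm m}}_1 = \norm{\Phi_j - \Psi_j^\star}_1$, so the total objective is again no larger.

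The main delicacy I expect concerns the periodic structure of $\RT$: convolution must be interpreted modulo $2\pi$, and translation invariance of the cost must be read with respect to the natural periodic distance rather than the literal absolute value $|\theta-\phi|$. Once the constructed objects are verified to lie in $\ccM_+(\RT)$ and $\ccM_+(\RT^2)$, and once translation invariance is justified on $\RT$, the two inequalities follow by straightforward accounting, and the statement for $W_\kappa = S_\kappa^{1/p}$ then follows immediately.
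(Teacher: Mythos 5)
Your argument is correct, and it is worth noting that the paper itself offers no proof of this theorem --- it is imported verbatim from \cite{GeorgiouKT09_57}. The closest in-paper analogue is Proposition~\ref{prop:contractive_prop} (the Toeplitz-matrix version), which the authors prove in the appendix by an entirely \emph{dual} argument: for additive noise they show the extra term $\langle \bLambda_0+\bLambda_1,\bR_w\rangle$ is nonpositive because $\Gamma^*(\bLambda_0)(\theta)+\Gamma^*(\bLambda_1)(\theta)\le \costfunc(\theta,\theta)=0$ on the feasible set, and for multiplicative noise they show the feasible set $\Omega_\costfunc^\kappa$ is mapped into itself under $\bLambda\mapsto\bLambda\odot\overline{\bR_w}$. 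Your proof is the complementary \emph{primal} construction: push the optimal triple $(\Psi_0^\star,\Psi_1^\star,M^\star)$ forward, augmenting the plan by a zero-cost diagonal piece $\Phi_{\rm a}(\theta)\delta(\theta-\phi)$ in the additive case and averaging translates of $M^\star$ against $\Phi_{\rm m}$ in the multiplicative case, then bound the penalty terms (exact cancellation, respectively Young's inequality). Both routes are sound; yours is more elementary and constructive and needs no duality theory, while the dual route is the one that scales to the Toeplitz setting of Proposition~\ref{prop:contractive_prop}, where the primal feasible set is pinned down only implicitly through the moment constraints $\Gamma(\Psi_j)=\bR_j$. Two small points to tidy up: the optimum in \eqref{eq:Skappa} may only be approached, so run the argument with $\varepsilon$-optimizers and let $\varepsilon\to 0$; and the translation-invariance caveat you flag is genuine --- $|\theta-\phi|^p$ is not literally shift-invariant modulo $2\pi$, so the convolution step requires the periodic (geodesic) reading of the cost, exactly mirroring the shift-invariance hypothesis the paper imposes in the multiplicative half of Proposition~\ref{prop:contractive_prop}.
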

As we shall see, it is possible to construct notions of distance on the space of positive semi-definite Toeplitz matrices that have properties similar to those stated in Theorem~\ref{theorem:georgiou_contractive}.
 
%
%
\section{A notion of distance for Toeplitz matrices} \label{seq:distance_notion}

As noted above, any positive semi-definite Toeplitz matrix $\bR$ has at least one spectral representation, i.e., there exists at least one spectrum $\Phi$ that is consistent with it. Thus, we define the distance, $\Omtcov$, between two positive semi-definite Toeplitz matrices, $\bR_0$ and $\bR_1$, as the minimum transportation cost, as measured by $\Omtspec$, between spectra consistent with the respective matrices, i.e.,
\begin{equation} \label{eq:omt_toeplitz}
\begin{aligned}
\Omtcov(\bR_0,\bR_1)\triangleq\min_{\Phi_0,\Phi_1\in \ccM_+(\RT)}\qquad & \Omtspec(\Phi_0,\Phi_1)\\
\mbox{ subject to }\qquad &\Gamma(\Phi_j)=\bR_j,\; j=0,1.
\end{aligned}
\end{equation}
Considering the definition of $\Omtspec$ in \eqref{eq:omt}, this can equivalently be formulated as the convex optimization problem
\begin{equation}\label{eq:omt_toeplitz2}
\begin{aligned}
 \Omtcov(\bR_0,\bR_1) = \min_{M\in \ccM_+(\RT^2)}\; &\int_{\RT^2} \costfunc(\theta,\varphi)M(\theta,\varphi)d\theta d\varphi\\
\mbox{ subject to }\; &\Gamma\left(\int_\RT M(\theta,\varphi)d\varphi\right) = \bR_0 \\
&\Gamma\left(\int_\RT M(\theta,\varphi)d\theta\right) = \bR_1.
\end{aligned}
\end{equation}
Note that the formulation in \eqref{eq:omt_toeplitz2} is only defined for covariance matrices with the same diagonal, i.e., the same $r_0$, as defined in \eqref{eq:r_k}, or, equivalently, covariance matrices whose consistent spectra have the same mass. 
However, in order to allow for mass differences, $\Omtcov(\bR_0,\bR_1)$ can be generalized in analog with \eqref{eq:Skappa} as
\begin{equation} \label{eq:omt_toeplitz_kappa}
\begin{aligned}
\Omtcov_\kappa(\bR_0,\bR_1)\triangleq\min_{\Phi_j\in \ccM_+(\RT)}\quad & \Omtspec_\kappa(\Phi_0,\Phi_1)\\
\mbox{ subject to }\quad &\Gamma(\Phi_j)=\bR_j \mbox{ for } j=0,1,
\end{aligned}
\end{equation}
or, equivalently,
\begingroup
\small
\begin{equation} \label{eq:omt_toeplitz_kappa2}
\begin{aligned}
\Omtcov_\kappa(\bR_0,\bR_1)=\min_{\substack{M \in \ccM_+(\RT^2)\\ \Psi_0, \Psi_1\in \ccM_+(\RT)}}\; & \int_{\RT^2}\costfunc(\theta,\varphi)M(\theta,\varphi)d\theta d\varphi\\
&+ \kappa\norm{\int_\RT M(\theta,\varphi)d\varphi - \Psi_0}_1 \\
&+ \kappa\norm{\int_\RT M(\theta,\varphi)d\theta - \Psi_1}_1 \\
\mbox{ subject to }\; &\Gamma(\Psi_j)=\bR_j \mbox{ for } j=0,1.
\end{aligned}
\end{equation}
\endgroup
Typically, the cost function $\costfunc$ would be selected to be symmetric in its arguments, in which case  $\Omtcov_\kappa$ would be symmetric as well, i.e., $\Omtcov_\kappa(\bR_0,\bR_1) = \Omtcov_\kappa(\bR_1,\bR_0)$ for any Toeplitz matrices  $\bR_0, \bR_1 \in \RM_+$. Although many possible choices of such cost functions exist, we will in most examples presented herein consider $\costfunc(\theta,\varphi) = \abs{e^{i\theta} - e^{i\varphi}}^2$, i.e., the cost function quantifies distances as the square of the distance between the corresponding points on the unit circle. As we show in Section \ref{sec:sos}, this particular choice of cost function allows for a sum-of-squares relaxation of the dual formulation of \eqref{eq:omt_toeplitz_kappa2}. This dual formulation is presented next.
\subsection{The dual formulation}
In order to study properties of the distance notion $\Omtcov_\kappa$, we consider the dual formulation of \eqref{eq:omt_toeplitz_kappa2}, where we assume that the cost function, $\costfunc$, is a continuous non-negative function on $\RT^2$. In order to address this problem, we first note that the adjoint operator%
\footnote{Strictly speaking, this is the pre-adjoint operator of $\Gamma$.} $\Gamma^*:\RM^{n}\to C_{\rm perio}(\RT)$ of the operator $\Gamma$ is
\begin{align*}
	\Gamma^*(\bR)(\theta) = \frac{1}{2\pi}\ba(\theta)^H\bR \ba(\theta)
\end{align*}
since
\begin{align*}
	\langle \Gamma(\Phi), \bR \rangle &= \langle \frac{1}{2\pi} \int_\RT \ba(\theta) \Phi(\theta)\ba(\theta)^Hd\theta, \bR \rangle 
	\\
	&= \int_\RT \frac{1}{2\pi}\Phi(\theta) \ba(\theta)^H\bR \ba(\theta)d\theta = \langle \Phi, \Gamma^*(\bR) \rangle
\end{align*}
where, in the first line, the inner product is the one associated with $\RM^n$, and the second line is the bilinear form with arguments $\Gamma^*(\RR)\in C_{\rm perio}(\RT)$ and $\Phi\in \ccM_+(\RT)$. With this result, we can derive an expression of the dual problem by considering the Lagrangian relaxation of~\eqref{eq:omt_toeplitz_kappa2}. The Lagrangian is given by
\begin{align*}
&\mathcal{L}_\kappa(M,\Psi_0,\Psi_1,\bLambda_0,\bLambda_1) = \int_{\RT^2} \costfunc(\theta,\varphi)M(\theta,\varphi)d\theta d\varphi \\
&\quad+ \langle \bR_0 \!-\! \Gamma(\Psi_0),\bLambda_0 \rangle +\langle \bR_1 \!-\! \Gamma(\Psi_1),\bLambda_1  \rangle \\
 &\quad+\kappa\norm{\int_{\RT}M(\theta,\varphi)d\varphi - \Psi_0}_1 + \kappa\norm{\int_{\RT}M(\theta,\varphi)d\theta - \Psi_1}_1  \\
 &= \langle \bLambda_0, \bR_0  \rangle + \langle \bLambda_1, \bR_1  \rangle +  \int_{\RT^2} \costfunc(\theta,\varphi)M(\theta,\varphi)d\theta d\varphi\\
 &\quad- \langle \Psi_0, \Gamma^*(\Lambda_0)  \rangle - \langle \Psi_1, \Gamma^*(\Lambda_1)  \rangle \\
 &\quad+\kappa\norm{\int_{\RT}M(\theta,\varphi)d\varphi - \Psi_0}_1 + \kappa\norm{\int_{\RT}M(\theta,\varphi)d\theta - \Psi_1}_1.
\end{align*}
Note that the Lagrange multiplier matrices $\bLambda_0$ and $\bLambda_1$ may be taken as Hermitian matrices, as $\bR_0 - \Gamma\left(\Psi_0\right)$ and $\bR_1 - \Gamma\left(\Psi_1\right)$ are Hermitian, and thus all inner products are real. Considering the infimum of $\mathcal{L}_\kappa$ with respect to $\Psi_0$ and $\Psi_1$, it may be noted that this is only finite if $\Gamma^*(\Lambda_0)(\theta) \leq \kappa$ and $\Gamma^*(\Lambda_1)(\varphi) \leq \kappa$, for all $\theta, \varphi \in \RT$. If this is satisfied, the Lagrangian is, for any fixed non-negative $M$, minimized by $\Psi_0$ and $\Psi_1$ given by
\begin{align*}
	\Psi_0(\theta) &= \indfunc{\Gamma^*(\Lambda_0)(\theta) \in [-\kappa,\kappa]} \int_{\RT}M(\theta,\varphi)d\varphi \\
	\Psi_1(\varphi) &= \indfunc{\Gamma^*(\Lambda_1)(\varphi) \in [-\kappa,\kappa]} \int_{\RT}M(\theta,\varphi)d\theta,
\end{align*}
where $\indfunc{\cdot}$ is the indicator function. Using this, and considering the infimum with respect to $M$, we arrive at
{
\begin{align*}
\inf_{\substack{M\in \ccM_+(\RT^2)\\ \Psi_0, \Psi_1 \in \ccM_+(\RT)}} \!\mathcal{L}_\kappa &=  \begin{cases}
		\langle \bLambda_0, \bR_0  \rangle \!+\! \langle \bLambda_1, \bR_1  \rangle & \!\text{if } \:(\bLambda_0,\bLambda_1) \in \Omega_\costfunc^\kappa \\
		-\infty & \!\text{otherwise}
	\end{cases}
\end{align*}
}
where
\begin{align*}
	\Omega_\costfunc^\kappa = \big\{ &\bLambda_0, \bLambda_1 \in \RM^{n}  \mid\\
	&\Gamma^*(\bLambda_0)(\theta)+ \Gamma^*(\bLambda_1)(\varphi) \leq \costfunc(\theta,\varphi), \\
	& \Gamma^*(\bLambda_0)(\theta) \leq \kappa ,\; \Gamma^*(\bLambda_1)(\varphi) \leq \kappa \; \mbox{ for all }  \theta,\varphi \in \RT \big\},
\end{align*}
where we have used that the cost function $\costfunc$ is non-negative. This yields the dual problem.
%
%
%
\begin{proposition}
Let the cost function $c$ be continuous and non-negative, and let $\kappa >0$. Then, the dual problem of \eqref{eq:omt_toeplitz_kappa2} is
\begin{align} \label{eq:dual_problem}
	\maxemphwrt[(\bLambda_0,\bLambda_1) \in \Omega_\costfunc^\kappa] & \quad \langle \bLambda_0, \bR_0  \rangle + \langle \bLambda_1, \bR_1  \rangle
\end{align}
where
\begin{align*}
	\Omega_\costfunc^\kappa = \big\{ &\bLambda_0, \bLambda_1 \in \RM^{n}  \mid\\
	&\Gamma^*(\bLambda_0)(\theta)+ \Gamma^*(\bLambda_1)(\varphi) \leq \costfunc(\theta,\varphi), \\
	& \Gamma^*(\bLambda_0)(\theta) \leq \kappa ,\; \Gamma^*(\bLambda_1)(\varphi) \leq \kappa \;\; \forall  \theta,\varphi \in \RT \big\}.
\end{align*}
\end{proposition}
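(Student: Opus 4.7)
The plan is to derive the dual of the convex program \eqref{eq:omt_toeplitz_kappa2} by forming its Lagrangian with respect to the two matrix equality constraints $\Gamma(\Psi_j)=\bR_j$, and then explicitly carrying out the inner minimization over the primal variables $M,\Psi_0,\Psi_1$. Since these constraints equate Hermitian matrices, the Lagrange multipliers $\bLambda_0,\bLambda_1$ can be taken in $\RM^{n}$, and all resulting inner products will be real. The key algebraic tool is the adjoint identity
\begin{align*}
\langle \Gamma(\Psi_j),\bLambda_j\rangle = \langle \Psi_j,\Gamma^*(\bLambda_j)\rangle,\qquad \Gamma^*(\bLambda)(\theta)=\tfrac{1}{2\pi}\ba(\theta)^H\bLambda\ba(\theta),
\end{align*}
which converts the matrix-valued constraints into pointwise linear functionals of $\Psi_j$ on $\RT$, and hence makes the Lagrangian separable in $\theta$ and $\varphi$.

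Next I would eliminate $\Psi_0,\Psi_1$. After the adjoint substitution, the only terms involving $\Psi_0$ are $-\langle\Psi_0,\Gamma^*(\bLambda_0)\rangle$ and the $L_1$ penalty $\kappa\|f_0-\Psi_0\|_1$ with $f_0=\int_\RT M(\cdot,\varphi)d\varphi \ge 0$. This infimum over $\Psi_0\in\ccM_+(\RT)$ can be taken pointwise in $\theta$: a short case analysis on the sign of $\Gamma^*(\bLambda_0)(\theta)$ shows that the one-variable problem $\min_{s\ge 0}[-as+\kappa|f_0(\theta)-s|]$ is bounded below exactly when $a=\Gamma^*(\bLambda_0)(\theta)\le\kappa$, and that the minimizer is $\Psi_0=f_0\cdot\indfunc{\Gamma^*(\bLambda_0)\in[-\kappa,\kappa]}$ (matching the expression stated in the excerpt). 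The same analysis applies to $\Psi_1$.

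With $\Psi_0,\Psi_1$ optimized out, the reduced Lagrangian becomes $\langle\bLambda_0,\bR_0\rangle+\langle\bLambda_1,\bR_1\rangle$ plus an integral against $M(\theta,\varphi)d\theta d\varphi$ whose coefficient is $c(\theta,\varphi)-\Gamma^*(\bLambda_0)(\theta)-\Gamma^*(\bLambda_1)(\varphi)$ on the region where both $\Gamma^*(\bLambda_j)$ lie in $[-\kappa,\kappa]$, and a similar non-negativity-preserving expression otherwise. Minimizing over $M\in\ccM_+(\RT^2)$ therefore gives a finite value precisely when this coefficient is non-negative everywhere, i.e.\ $\Gamma^*(\bLambda_0)(\theta)+\Gamma^*(\bLambda_1)(\varphi)\le c(\theta,\varphi)$; combined with the $\Psi_j$-conditions $\Gamma^*(\bLambda_j)\le\kappa$, this gives exactly the feasible set $\Omega_c^\kappa$, and the dual objective is $\langle\bLambda_0,\bR_0\rangle+\langle\bLambda_1,\bR_1\rangle$, as claimed.

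The main technical obstacle is justifying the manipulations in an infinite-dimensional setting: writing weak duality is straightforward, but the proposition implicitly claims that the dual is well-posed. To make the pointwise minimization over $\Psi_j$ rigorous one needs continuity of $\Gamma^*(\bLambda_j)$ (which follows from the assumed continuity of $c$ and $\ba$), and one should verify the interchange of the $\inf$ with the $\int$ via a standard measurable-selection or Rockafellar-type interchange theorem applied to integrands non-negative or bounded below. Strong duality (if needed for later arguments) would follow from Fenchel--Rockafellar duality, using that the primal is convex and Slater's condition is satisfied whenever $\bR_0,\bR_1$ are positive definite, since then each admits strictly positive consistent spectra. None of this is required for the statement of the proposition itself, however, which asserts only the form of the dual program.
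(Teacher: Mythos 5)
Your derivation follows essentially the same route as the paper: Lagrangian relaxation of the two matrix constraints, the adjoint identity $\langle\Gamma(\Psi),\bLambda\rangle=\langle\Psi,\Gamma^*(\bLambda)\rangle$, pointwise elimination of $\Psi_0,\Psi_1$ yielding the conditions $\Gamma^*(\bLambda_j)\le\kappa$ and the same indicator-function minimizers, and then minimization over $M$ giving the remaining constraint. Your explicit one-variable case analysis and the remarks on interchange of infimum and integral only flesh out steps the paper states without detail, and your Slater's-condition comment matches the paper's own discussion of strong duality following the proposition.
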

%
Since the primal and dual problems are convex and the set of feasible points $\Omega_\costfunc^\kappa$ has non-empty interior for any \mbox{$\kappa >  0$}, Slater's condition (see, e.g., \cite{BoydV04}) gives that strong duality holds and hence the duality gap between \eqref{eq:omt_toeplitz_kappa2} and \eqref{eq:dual_problem} is zero. This can be generalized to hold for lower semi-continuous cost functions analogous to \cite[proof of Theorem 4.1]{Edwards10_200}. 
Also, note the strong resemblance in form compared to the dual of \eqref{eq:Skappa}, which is given by \cite{GeorgiouKT09_57}
\begin{align*}
\maxwrt[\lambda_0, \lambda_1\in C_{\rm perio}(\RT)] \; & \int_\RT \lambda_0(\theta)\Phi_0(\theta)d\theta+\int_\RT \lambda_1(\varphi)\Phi_1(\varphi)d\varphi\\
\mbox{ subject to } \quad &  \lambda_0(\theta)\!+\!\lambda_1(\varphi)\le \costfunc(\theta, \varphi) \;\; \mbox{ for all } \theta, \varphi \in \RT\\
&\lambda_0(\theta)\le \kappa \,\qquad \qquad \qquad \mbox{ for all } \theta \in \RT\\
&\lambda_1(\varphi)\le \kappa \qquad \qquad \qquad \mbox{ for all } \varphi \in \RT.
\end{align*}
Similarly, for the case when the diagonals of $\bR_0$ and $\bR_1$ are required to be equal, the dual problem is
\begin{align} \label{eq:dual_problem_no_kappa}
	\maxwrt[(\bLambda_0,\bLambda_1) \in \Omega_\costfunc] & \quad \langle \bLambda_0, \bR_0  \rangle + \langle \bLambda_1, \bR_1  \rangle,
\end{align}
where
\begin{align*}
	\Omega_\costfunc = \big\{ &\bLambda_0, \bLambda_1 \in \RM^{n}  \mid\\
	&\Gamma^*(\bLambda_0)(\theta)+ \Gamma^*(\bLambda_1)(\varphi) \leq \costfunc(\theta,\varphi) \;\; \mbox{ for all }  \theta,\varphi \in \RT \big\}.
\end{align*}
This is the dual problem of \eqref{eq:omt_toeplitz}, where adding and subtracting mass is not allowed in the transport problem.

\subsection{Properties of distance notion $\Omtcov_\kappa$}
For the distance notion $\Omtcov_\kappa$ in \eqref{eq:omt_toeplitz_kappa}, the following proposition holds.
%
\begin{proposition}\label{prop:semi-metric}
Let $\kappa > 0$ and let the cost function $\costf$ be a continuous function and a semi-metric on $\RT$. Then, the distance notion $\Omtcov_\kappa$ in \eqref{eq:omt_toeplitz_kappa} is a semi-metric on the set of positive semi-definite Toeplitz matrices.
\end{proposition}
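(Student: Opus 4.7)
The plan is to verify the three axioms of a semi-metric on the set of positive semi-definite Toeplitz matrices: non-negativity, symmetry, and the identity of indiscernibles $\Omtcov_\kappa(\bR_0,\bR_1) = 0 \Leftrightarrow \bR_0 = \bR_1$. For non-negativity, since $\costf$ is a semi-metric it is non-negative on $\RT^2$, the $L_1$-penalty terms are non-negative by construction, and $\kappa > 0$, so every feasible objective value in \eqref{eq:omt_toeplitz_kappa2} is non-negative. For symmetry, the semi-metric property of $\costf$ gives $\costf(\theta,\varphi) = \costf(\varphi,\theta)$, and swapping the roles of $\theta$ and $\varphi$ in the transport plan $M$ (together with swapping $\Psi_0 \leftrightarrow \Psi_1$ and the two marginal penalty terms) maps any feasible point for $(\bR_0,\bR_1)$ to a feasible point for $(\bR_1,\bR_0)$ with the same objective value.

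The identity of indiscernibles splits into two directions. The easy direction is $\bR_0 = \bR_1 \Rightarrow \Omtcov_\kappa = 0$: pick any spectrum $\Phi$ with $\Gamma(\Phi) = \bR_0$ (which exists since $\bR_0$ is positive semi-definite Toeplitz), set $\Psi_0 = \Psi_1 = \Phi$, and choose the diagonal transport plan $M$ formally given by $M(\theta,\varphi) = \Phi(\theta)\delta(\theta - \varphi)$. Then both $L_1$-penalty terms vanish and the transport cost is $\int_\RT \costf(\theta,\theta)\Phi(\theta)d\theta = 0$ since $\costf$ vanishes on the diagonal as a semi-metric, which witnesses that $\Omtcov_\kappa(\bR_0,\bR_0) \le 0$.

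For the converse $\Omtcov_\kappa = 0 \Rightarrow \bR_0 = \bR_1$, I consider an optimal (or near-optimal) triple $(M,\Psi_0,\Psi_1)$. Non-negativity of each of the three summands in \eqref{eq:omt_toeplitz_kappa2} forces all three to be zero. The vanishing $L_1$-penalties yield that the marginals of $M$ coincide with $\Psi_0$ and $\Psi_1$, respectively. The vanishing transport cost, combined with the fact that $\costf$ is a semi-metric (so $\costf(\theta,\varphi) = 0$ iff $\theta = \varphi$), forces the non-negative measure $M$ to be supported on the diagonal $\{\theta = \varphi\} \subset \RT^2$. But any non-negative measure concentrated on the diagonal has identical marginals, hence $\Psi_0 = \Psi_1$, and applying $\Gamma$ gives $\bR_0 = \Gamma(\Psi_0) = \Gamma(\Psi_1) = \bR_1$.

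The main obstacle is the measure-theoretic step in the last paragraph: making rigorous the assertion that a non-negative generalized function $M \in \ccM_+(\RT^2)$ with $\int_{\RT^2} \costf(\theta,\varphi) M(\theta,\varphi) d\theta d\varphi = 0$ must be supported on the diagonal. This requires using that $\costf$ is continuous and strictly positive off the diagonal, so that on any closed set disjoint from the diagonal $\costf$ is bounded below by some $\delta > 0$; testing $M$ against such sets shows they carry zero $M$-mass, and by an exhaustion argument $M$ is concentrated on $\{\theta = \varphi\}$. One should also confirm that an optimal transport plan exists (or work with a minimizing sequence and pass to the limit via weak-$*$ compactness of bounded non-negative measures), so that the argument can be applied at an honest optimum.
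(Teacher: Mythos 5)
Your proposal is correct and follows essentially the same route as the paper's own proof: non-negativity and symmetry from the corresponding properties of $\costf$, a diagonal transport plan for the easy direction, and for the converse the observation that all three terms of \eqref{eq:omt_toeplitz_kappa2} must vanish, forcing $M$ to be supported on $\{\theta=\varphi\}$, hence $\Psi_0=\Psi_1$ and $\bR_0=\bR_1$. The only difference is that you flag the measure-theoretic details (strict positivity of $\costf$ off the diagonal, existence of an optimizer via weak-$*$ compactness) that the paper leaves implicit, which is a welcome refinement rather than a departure.
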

%
\begin{proof}
See appendix.
\end{proof}
%
The implication of the semi-metric property is that $\Omtcov_\kappa$ may indeed be used to quantify distances between covariance matrices, or, stochastic processes. We may also state the following proposition.
%
\begin{proposition}\label{prop:contractive_prop}
Let the cost function $\costfunc$ be continuous and non-negative and such that $\costfunc(\theta,\theta) = 0$, for all $\theta \in \RT$. Then, the distance measure $\Omtcov_\kappa$ defined in \eqref{eq:omt_toeplitz_kappa} is contractive with respect to additive noise.
If $\costfunc$ is also shift-invariant, i.e., \mbox{$\costfunc(\theta-\phi,\varphi-\phi) = \costfunc(\theta,\varphi)$}, for all $\theta,\varphi,\phi \in \RT$, then $\Omtcov_\kappa$ is also contractive with respect to multiplicative noise whose covariance matrix has diagonal elements smaller than or equal to unity.
\end{proposition}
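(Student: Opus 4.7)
The plan is to construct feasible primal candidates for the noise-corrupted problems by perturbing an optimal primal triple of \eqref{eq:omt_toeplitz_kappa2} at $(\bR_0,\bR_1)$, and to verify that the objective is not increased. This adapts the argument behind Theorem~\ref{theorem:georgiou_contractive} from spectra to Toeplitz matrices, exploiting linearity of $\Gamma$ in the additive case and shift invariance of $\costfunc$ in the multiplicative case.

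For the additive case, I first fix any spectrum $\Phi_a\in\ccM_+(\RT)$ consistent with $\bR_a$, whose existence is guaranteed by positive semi-definiteness and the Toeplitz structure of $\bR_a$. Given an optimal triple $(M^\star,\Psi_0^\star,\Psi_1^\star)$ at $(\bR_0,\bR_1)$, I take as candidate $\Psi_j^\dagger := \Psi_j^\star + \Phi_a$ together with the transport plan $M^\dagger := M^\star + D_\#\Phi_a$, where $D_\#\Phi_a$ denotes the pushforward of $\Phi_a$ under the diagonal embedding $\theta\mapsto(\theta,\theta)$ of $\RT$ into $\RT^2$. Linearity of $\Gamma$ yields $\Gamma(\Psi_j^\dagger)=\bR_j+\bR_a$; both marginals of $M^\dagger$ are shifted by exactly $\Phi_a$, so the two penalty norms $\|\cdot\|_1$ are unchanged; and the transport cost is augmented only by $\int_\RT \costfunc(\theta,\theta)\Phi_a(\theta)\,d\theta = 0$ by hypothesis. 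The candidate therefore attains the same objective value as the original optimum, establishing the additive contraction.

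For the multiplicative case, I fix any $\Phi_m\in\ccM_+(\RT)$ consistent with $\bR_m$, noting that $\|\Phi_m\|_1 = 2\pi(\bR_m)_{11}\le 2\pi$, and define candidates by convolving the optimal variables with $\Phi_m/(2\pi)$, namely $\tilde{\Psi}_j := \Psi_j^\star \ast \Phi_m/(2\pi)$ and $\tilde{M}(\theta,\varphi) := \tfrac{1}{2\pi}\int_\RT M^\star(\theta-\phi,\varphi-\phi)\Phi_m(\phi)\,d\phi$. Standard Fourier computations yield $\Gamma(\tilde{\Psi}_j) = \bR_j \circ \bR_m$ (the Hadamard product, which is precisely the Toeplitz covariance of the product of two independent WSS processes), and the integration marginals of $\tilde{M}$ are directly seen to be $\tilde{\Psi}_j$. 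Invoking the shift invariance of $\costfunc$ together with the change of variables $(\theta',\varphi')=(\theta-\phi,\varphi-\phi)$ collapses the transport cost of $\tilde{M}$ to $(\bR_m)_{11}\int_{\RT^2}\costfunc\,M^\star$, and Young's convolution inequality bounds each penalty term by $(\bR_m)_{11}$ times the corresponding original penalty. Summing, the candidate achieves objective value at most $(\bR_m)_{11}\cdot\Omtcov_\kappa(\bR_0,\bR_1)\le\Omtcov_\kappa(\bR_0,\bR_1)$, which is the desired contraction.

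The main technical obstacle I anticipate is making the singular constructions rigorous inside $\ccM_+(\RT)$ and $\ccM_+(\RT^2)$: the diagonal pushforward $D_\#\Phi_a$ and the convolutions with $\Phi_m$ must be interpreted through their action on continuous test functions in $C_{\rm perio}(\RT)$ and $C_{\rm perio}(\RT^2)$, so that the marginal identities, the Fourier computations, and the change of variables remain valid when $\Phi_a$ or $\Phi_m$ has a singular part (e.g., a finite sum of Dirac deltas). This is routine measure-theoretic bookkeeping, but it should be carried out explicitly so that the cost and mass manipulations above are justified in the weak dual sense used throughout Section~\ref{sec:background}.
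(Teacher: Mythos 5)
Your proof is correct, but it takes a genuinely different route from the paper's. You work entirely on the primal side: you perturb a (near-)optimal triple $(M^\star,\Psi_0^\star,\Psi_1^\star)$ of \eqref{eq:omt_toeplitz_kappa2} --- adding the noise spectrum along the diagonal of $\RT^2$ in the additive case, convolving the plan and the $\Psi_j$ with $\Phi_m/(2\pi)$ in the multiplicative case --- and verify that the resulting triple is feasible for the corrupted covariances with no larger objective. The paper instead argues on the dual side: invoking strong duality, it writes $\Omtcov_\kappa(\bR_0',\bR_1')$ as a maximum over $\Omega_\costfunc^\kappa$ and shows, for additive noise, that $\langle\bLambda_0+\bLambda_1,\bR_w\rangle\le 0$ for every dual-feasible pair, and, for multiplicative noise, that $(\bLambda_0,\bLambda_1)\in\Omega_\costfunc^\kappa$ implies $(\bLambda_0\odot\overline{\bR_w},\bLambda_1\odot\overline{\bR_w})\in\Omega_\costfunc^\kappa$. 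Your route is the natural lift of the spectral-domain argument behind Theorem~\ref{theorem:georgiou_contractive} to the Toeplitz setting; it is more elementary in that it needs no duality at all (any feasible triple upper-bounds the infimum, and an $\varepsilon$-optimal triple suffices if attainment is in doubt), and it even yields the slightly stronger bound $\Omtcov_\kappa(\bR_0',\bR_1')\le (\bR_w)_{11}\,\Omtcov_\kappa(\bR_0,\bR_1)$ in the multiplicative case. The price is the measure-theoretic care you flag (diagonal pushforward, convolution of measures with singular parts), whereas the paper's dual argument trades that for a reliance on strong duality via Slater's condition but manipulates only finite matrices and the continuous functions $\Gamma^*(\bLambda_j)$. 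One small slip: the marginals of $\tilde M$ are the convolutions of the marginals of $M^\star$ with $\Phi_m/(2\pi)$, not $\tilde\Psi_j$ themselves (they coincide only when the marginals of $M^\star$ equal the $\Psi_j^\star$, which the penalized formulation does not enforce); your subsequent appeal to Young's inequality is exactly the correct handling of the discrepancy, so this is a wording issue rather than a gap.
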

\begin{proof}
See appendix.
\end{proof}
%
The statement of this proposition is that when two stochastic processes become contaminated by noise, the distance notion $\Omtcov_\kappa$ decreases and hence the processes become harder to distinguish. Intuitively, this is a desirable property of $\Omtcov_\kappa$, as additive or multiplicative noise should indeed impede ones ability to discriminate between two processes. Proposition~\ref{prop:contractive_prop} may be proven by utilizing results in \cite{GeorgiouKT09_57}. However, in the interest of making the exposition self-contained, we  provide a direct proof in the appendix based on the dual formulation in \eqref{eq:dual_problem}.
It may be noted that the assumptions made in Proposition~\ref{prop:contractive_prop} regarding the cost function $\costfunc$ are quite mild, allowing for a large class of potential cost functions. In particular, \mbox{$\costfunc(\theta,\varphi) = \abs{e^{i\theta}-e^{i\varphi}}^2$} satisfies the conditions of Proposition~\ref{prop:contractive_prop} for both additive and multiplicative noise.
\section{Interpolation, extrapolation, and tracking}\label{sec:applications}
The formulation in \eqref{eq:omt_toeplitz} does not only define a notion of distance between two Toeplitz covariance matrices, $\bR_0$ and $\bR_1$; it also provides a means of forming interpolating matrices, i.e., defining intermediate covariance matrices $\bR_\tau$, for $\tau \in (0,1)$. In order to define the interpolating matrices, we again utilize the spectral representation of positive semi-definite Toeplitz matrices. To this end, we note that, given an optimal transport plan, $M$, found as the functional minimizing \eqref{eq:omt_toeplitz}, one may define spectra intermediate to the marginals $\int_\RT M(\theta,\varphi)d\varphi$ and $\int_\RT M(\theta,\varphi)d\theta$ by linearly shifting the frequency locations of the spectral mass, as dictated by $M$. That is, any mass transferred from $\phi$ to $\phi+\varphi$ is defined to, at $\tau \in [0,1]$, be located at frequency $\phi+\tau \varphi$. Using this, the intermediate spectrum is given by
\begin{align}
	\Phi_\tau^M(\theta) &\triangleq \int_{\RT^2}\delta_\theta(\{\phi + \tau\varphi\}_{{\rm mod}\, \RT}) M(\phi,\phi+\varphi) d\phi d\varphi \nonumber\\
	&= \int_\RT M(\theta-\tau\varphi, \theta+(1-\tau)\varphi) d\varphi.
\end{align}
Here, $\delta_\theta$ is the Dirac delta function localized at $\theta$, i.e., \mbox{$\delta_\theta(\phi) \triangleq \delta(\phi-\theta)$} and the integrands are extended periodically with period $2\pi$ outside the domain of integration. Also, we denote by $\{x\}_{{\rm mod}\, \RT}$ the value in $\RT$ that is congruent with $x$ modulo $2\pi$. Based on this definition of the interpolating spectrum $\Phi_\tau^M$, the corresponding interpolating covariance matrix, $\bR_\tau$ is defined as the unique Toeplitz matrix consistent with this spectrum, i.e.,  
\begin{align}\label{eq:geodesic}
	 & \bR_\tau\triangleq \Gamma(\Phi_\tau^M)\\
	&= \frac{1}{2\pi} \!\int_\RT\ba(\theta)\left(\int_\RT M(\theta-\tau\varphi, \theta+(1-\tau)\varphi) d\varphi\right)\ba(\theta)^Hd\theta \nonumber \\
	&=\frac{1}{2\pi} \!\int_{\RT^2}\!\ba(\{\phi\!+\!\tau\varphi\}_{{\rm mod}\, \RT})M(\phi,\phi\!+\!\varphi) \ba(\{\phi\!+\!\tau\varphi\}_{{\rm mod}\, \RT})^Hd\phi d\varphi \nonumber
\end{align}
for $\tau \in [0,1]$. To simplify the following exposition, let $\ccI_\tau(M)\triangleq\Gamma(\Phi_\tau^{M})$ denote the linear operator in \eqref{eq:geodesic} that maps a transport plan to an interpolating covariance matrix, i.e., $\bR_\tau = \ccI_\tau(M)$. It may be noted from \eqref{eq:geodesic} that $\ccI_\tau(M)$ is well-defined also for $\tau \neq [0,1]$, i.e., the formulation allows also for extrapolation. The following proposition follows directly from the definition in \eqref{eq:geodesic}.
\begin{proposition}\label{prp:properties}
For any $\tau\in \RR$, the following basic properties hold for $\bR_\tau = \ccI_\tau(M)$\emph{:}
\begin{enumerate}
\setlength{\itemsep}{1pt}
\setlength{\parskip}{1pt}
\item[a)] If $\bR_0$ and $\bR_1$ have the same diagonal, then it is also the diagonal of $\bR_\tau$.
\item[b)] The matrix $\bR_\tau$ is a Toeplitz matrix.
\item[c)] The matrix $\bR_\tau$ is positive semi-definite.
\end{enumerate}
\end{proposition}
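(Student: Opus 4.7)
The plan is to read all three properties off the integral representation
\[
\bR_\tau = \frac{1}{2\pi}\int_{\RT^2} \ba(\theta_\tau)\,\ba(\theta_\tau)^H\, M(\phi,\phi+\varphi)\, d\phi\, d\varphi,
\]
appearing in \eqref{eq:geodesic}, where $\theta_\tau \triangleq \{\phi+\tau\varphi\}_{{\rm mod}\, \RT}$. The key observation is that, for each fixed $(\phi,\varphi)$, the integrand is a rank-one Hermitian Toeplitz matrix with unit diagonal, and $M$ is a non-negative measure on $\RT^2$; all three properties will then follow because the structural features of $\ba(\theta)\ba(\theta)^H$ survive integration against $M$.

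For (b), I would simply recall that the $(k,\ell)$-entry of $\ba(\theta)\ba(\theta)^H$ equals $e^{i(k-\ell)\theta}$ and therefore depends only on $k-\ell$, so $\ba(\theta)\ba(\theta)^H$ is Hermitian Toeplitz for every $\theta$. Since the Hermitian Toeplitz matrices form a closed linear subspace of $\RM^n$ and $M\ge 0$, the integral above is a non-negative weighted superposition of Toeplitz matrices and hence is itself Toeplitz. Property (c) is equally immediate: fixing an arbitrary $\bx\in\RC^n$ and pushing it through the integral yields
\[
\bx^H \bR_\tau \bx = \frac{1}{2\pi}\int_{\RT^2} \abs{\ba(\theta_\tau)^H \bx}^2\, M(\phi,\phi+\varphi)\, d\phi\, d\varphi \ge 0,
\]
since the integrand is a squared modulus against the non-negative measure $M$.

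For (a), I would exploit the fact that the diagonal entries of $\ba(\theta)\ba(\theta)^H$ are identically one, so every diagonal entry of $\bR_\tau$ reduces to $(2\pi)^{-1}$ times the total mass of $M$, independently of $\tau$. Setting $\tau=0$ and $\tau=1$ in the spectral form $\Phi_\tau^M$ in \eqref{eq:geodesic} and performing the periodic change of variables $\phi\mapsto\theta-\tau\varphi$ identifies $\Phi_0^M$ and $\Phi_1^M$ with the two marginals of $M$, which by the feasibility of $M$ in \eqref{eq:omt_toeplitz2} are consistent with $\bR_0$ and $\bR_1$, respectively; hence the total mass of $M$ equals $2\pi$ times the common diagonal value $r_0$, and the diagonal of $\bR_\tau$ is $r_0$ for every $\tau$. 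None of the three items poses any substantive difficulty; the only care required is bookkeeping the $2\pi$ normalization and justifying the change of variables on $\RT^2$, which is routine once $2\pi$-periodicity of the integrands is invoked.
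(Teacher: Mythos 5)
Your proof is correct and follows exactly the route the paper intends: the paper gives no separate proof, stating only that the proposition ``follows directly from the definition in \eqref{eq:geodesic},'' i.e., from the representation of $\bR_\tau$ as an integral of the rank-one Hermitian Toeplitz matrices $\ba(\theta_\tau)\ba(\theta_\tau)^H$ (which have unit diagonal) against the non-negative measure $M$. Your write-up simply makes explicit the details the authors leave implicit, including the correct identification of the total mass of $M$ with $2\pi r_0$ via the marginal constraints.
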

Due to these properties, the proposed method offers a way of interpolating the covariances of, e.g., slowly varying time series as the interpolant $\bR_\tau$ allows for modeling linear changes in the spectrum of the process.
\begin{remark}
The interpolation approach generalizes trivially to the formulation in \eqref{eq:omt_toeplitz_kappa2} between the covariances $\Gamma(\Phi_0^M)$ and $\Gamma(\Phi_1^M)$ under the assumption that $\Psi_0$ and $\Psi_1$ are perturbations of $\Phi_0^M$ and $\Phi_1^M$, respectively. In order to define an interpolation and extrapolation procedure from $\bR_0$ to $\bR_1$ where there is a cost $\kappa$ for adding and subtracting mass, one may, along with the interpolation path $\ccI_\tau(M)$, linearly add the part corresponding to the added and subtracted mass, i.e., 
\begin{align}
\!\bR_\tau\!=
\ccI_\tau(M)+(1-\tau)\Gamma(\Psi_0-\Phi^M_0) +\tau\Gamma(\Psi_1-\Phi^M_1).
\end{align}
Note that in this scenario, positive semi-definiteness cannot be guaranteed for the extrapolation case.
\end{remark}

%
%

\subsection{Comparison with other methods}
The properties in Proposition \ref{prp:properties} distinguish the proposed interpolant $\bR_\tau$ from other proposed matrix geodesics. 
As an example, consider the interpolant induced by the Euclidean metric, i.e., the distance between two covariance matrices $\bR_0$ and $\bR_1$ is defined as $\norm{\bR_0 - \bR_1}_F$. This yields interpolants that are formed as convex combinations of $\bR_0$ and $\bR_1$, i.e., $\bR^{\rm conv}_{\tau}=\tau \bR_0 + (1-\tau)\bR_1$, for $\tau \in [0,1]$. This preserves the Toeplitz structure, as well as the diagonal of the end-point matrices and the positive semi-definiteness. However, from a spectral representation point of view, the convex combination gives rise to fade-in fade-out effects, i.e., only spectral modes directly related to $\bR_0$ and $\bR_1$ can be represented, and there can be no shift in the location of these modes (see also Example~\ref{ex:DOA} and Figure~\ref{fig:doa_interp_conv}).
Other more sophisticated options include, e.g., the geodesic with respect to g-convexity \cite{Wiesel12_60}
\begin{align} \label{eq:R_tilde}
	\tilde{\bR}_\tau = \bR_0^{1/2}\left(\bR_0^{-1/2}\bR_1 \bR_0^{-1/2}  \right)^\tau \bR_0^{1/2}
\end{align}
and the geodesic in \cite{KnottS84_43,NingJG13_20}, which builds on optimal mass transport of Gaussian distributions and can be expressed as
\begin{align} \label{eq:R_breve}
	\breve{\bR}_\tau \!= \left((1-\tau)\bR_0^{1/2} \!+ \!\tau \bR_1^{1/2}\bU \right)\! \left((1-\tau)\bR_0^{1/2} \!+\! \tau \bR_1^{1/2}\bU \right)^H
\end{align}
where
\[
\bU = \bR_1^{-1/2}\bR_0^{-1/2}\left(\bR_0^{1/2}\bR_1\bR_0^{1/2}\right)^{1/2}.
\]
One may also perform interpolation using geodesics induced by the log-Euclidean metric (see, e.g., \cite{JayasumanaHSLH13_cvpr}), i.e., where distances are defined as $\norm{\log\left(\bR_0\right) - \log\left(\bR_1\right)}_F$, with $\log\left( \cdot\right)$ here denoting the matrix logarithm. For this case, the geodesic is given by
\begin{align} \label{eq:R_log_euclid}
	\bR^{\text{log-Euclid}}_\tau = \exp\left( (1-\tau)\log\left( \bR_0 \right) + \tau \log\left(\bR_1 \right)  \right),
\end{align}
where $\exp\left( \cdot \right)$ denotes the matrix exponential. It may here be noted, that although the three geodesics in \eqref{eq:R_tilde}, \eqref{eq:R_breve}, and \eqref{eq:R_log_euclid} preserve positive definiteness, they are not defined for singular matrices due to the use of matrix inverses and matrix logarithms. Also, for general Toeplitz covariance matrices, these geodesics preserve neither the Toeplitz structure nor the diagonal of the end-point matrices.
Further, as noted above, the three properties in Proposition \ref{prp:properties} hold for any $\tau \in  \RR$ for the proposed approach, and thus directly allows for extrapolating using \eqref{eq:geodesic}. In contrast, it may be noted that for the linear combination $\bR^{\rm conv}_{\tau}$ there are no guarantees that the resulting matrix is positive semi-definite if $\tau\notin [0,1]$. Also, note that the alternative geodesics in \eqref{eq:R_tilde}, \eqref{eq:R_breve}, and \eqref{eq:R_log_euclid} do not naturally generalize to extrapolation.
%
%
\subsection{Tracking of slowly varying processes}
The proposed interpolant $\bR_\tau = \ccI_\tau(M)$ may also be readily used for tracking slowly varying stochastic process. As noted above, $\ccI_\tau(M)$ allows for the modeling of slow, i.e., locally linear, shifts in the location of spectral power. Building on this property, we can extend the optimal transport problem in \eqref{eq:omt_toeplitz}, in order to fit a covariance path $\bR_\tau$ to a sequence of $J$ covariance matrix estimates, $\hat{\bR}_{\tau_j}$, for $j = 1,\ldots,J$. As $\bR_\tau$ is unambiguously determined from a transport plan $M$ via $\ccI_\tau(M)$, this tracking problem may be formulated as the convex optimization problem
\begin{align}\label{eq:omt_tracking}
\minwrt[M\in \ccM_+(\RT^2)] & \int_{\RT^2} \!c(\theta,\varphi)
M(\theta,\varphi)d\theta d\varphi\!+\!\lambda\sum_{j=1}^J\norm{\ccI_{\tau_j}(M)\! -\!\hat{\bR}_{\tau_j}}_F^2
\end{align}
where $\lambda > 0$ is a user-specified regularization parameter. As may be noted from \eqref{eq:omt_tracking}, the optimal transport plan $M$ is here determined as the one that minimizes not only the transport cost, but also takes into account the deviations of the interpolant $\bR_{\tau_j} = \ccI_{\tau_j}(M)$ from the available covariance matrix estimates $\hat{\bR}_{\tau_j}$. The behavior of this construction is illustrated in the numerical section.

\subsection{Clustering: Barycenter computation} \label{sec:barycenter}
As a further example, we will see that the barycenter with respect to the distance notion $\Omtcov_\kappa$ may be formulated as a convex optimization problem. This might be desirable in clustering or classification applications, where one is interested in either identifying classes of signals or processes based on their covariance matrices or associate a given covariance matrix with such a signal class. Considering the case of clustering,  assume that $L$ covariance matrices, $\bR_\ell$, $\ell = 1,\ldots, L$, are available. Then, we may define their barycenter via $\Omtcov_\kappa$ according to
\begin{align} \label{eq:omt_barycenter}
	\bR_{\text{bary}} = \argminwrt[\bR\in \RM_+^{n}] \sum_{\ell=1}^L \Omtcov_\kappa(\bR, \bR_\ell),
\end{align}
i.e., as the covariance matrix that minimizes the sum of $\Omtcov_\kappa$ for the set of covariance matrices $\bR_\ell$. Explicitly, $\bR_{\text{bary}}$ solves the convex optimization problem
\begin{equation} \label{eq:omt_barycenter_explicit}
\begin{aligned}
	\minwrt[\substack{\bR\in \RM_+^{n}\\ M_\ell\in \ccM_+(\RT^2) \\ \Psi_\ell\in \ccM_+(\RT)\\\Phi_\ell\in \ccM_+(\RT)}]\quad &  \sum_{\ell=1}^L \int_{\RT^2} \costfunc(\theta,\varphi) M_\ell(\theta,\varphi) d\theta d\varphi \\
	&+\kappa\sum_{\ell=1}^L\norm{\int_{\RT} M_\ell(\theta,\phi)d\theta-\Phi_\ell}_1 \\
	&+\kappa\sum_{\ell=1}^L\norm{\int_{\RT} M_\ell(\theta,\phi)d\phi-\Psi_\ell}_1 \\
	\text{subject to}\quad & \quad\Gamma(\Phi_\ell) = \bR \\
	&\quad\Gamma(\Psi_\ell) = \bR_\ell ,\; \ell = 1,\ldots,L.
\end{aligned}
\end{equation}
This formulation allows for using, e.g., K-means clustering (see, e.g., \cite{DellerHP99}) in order to identify classes of covariance matrices, as well as classify a given covariance matrix according to these classes. Classification of a covariance matrix $\bR$ according to classes defined by a set of barycenters $\bR_{\text{bary}}^{(j)}$, $j = 1,\ldots, J,$ may then be formulated as
\begin{align} \label{eq:omt_barycenter_classify}
	\argminwrt[j \in \{1,\ldots,J  \}] \Omtcov_\kappa(\bR,\bR_{\text{bary}}^{(j)}).
\end{align}
In Section~\ref{ssec:Clustering}, we present a simple illustration of this potential application, considering unsupervised clustering of phonemes.
%

\begin{figure}[t!]
        \centering
            \includegraphics[width=.48\textwidth]{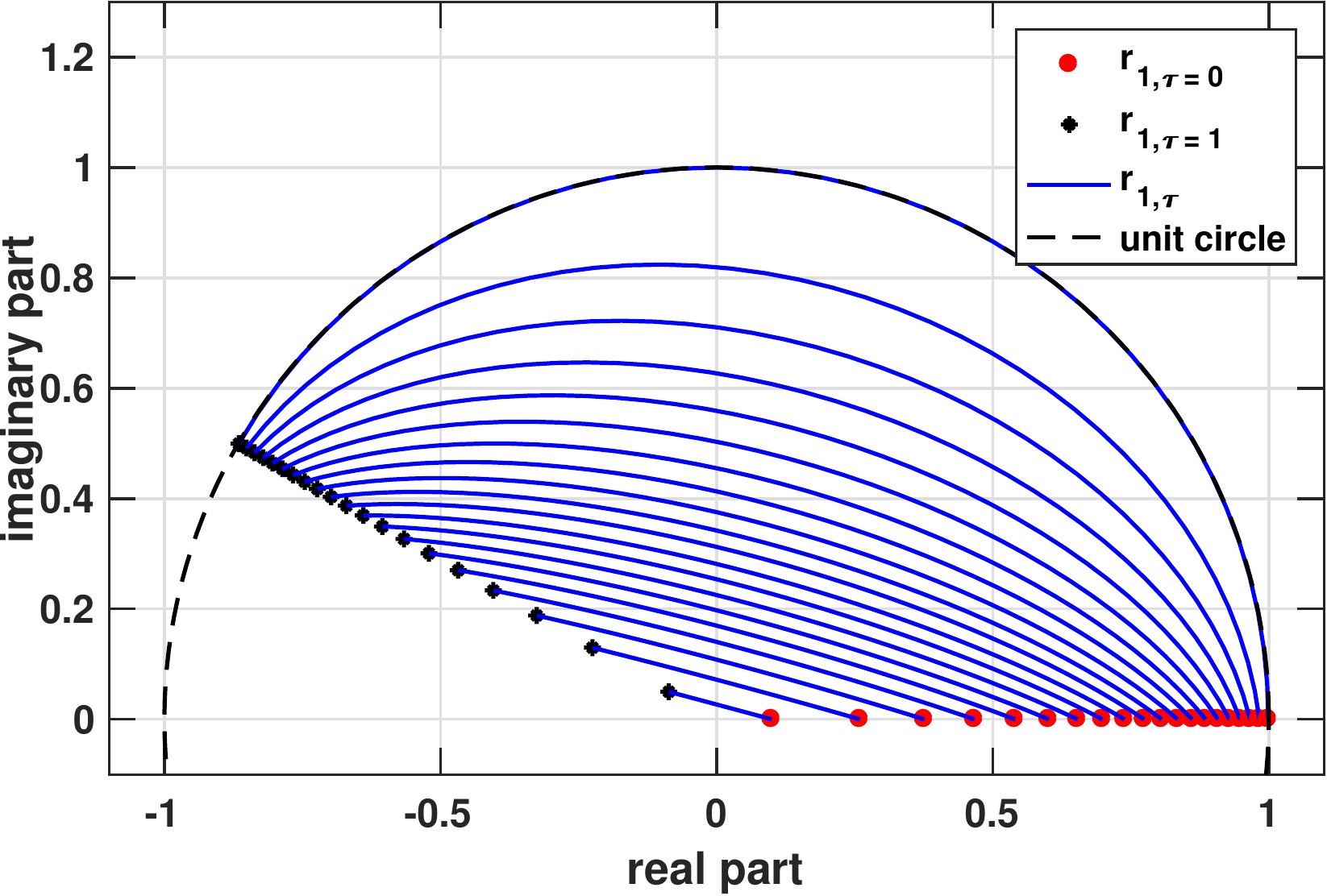}
           \caption{Path of the off-diagonal element of the covariance matrix $\bR_\tau$ for the case $n = 2$, when choosing $\bR_0$ and $\bR_1$ according to \eqref{eq:basic_example_R0_R1}.}
            \label{fig:basic_example}
\vspace{-2mm}\end{figure}
%
%
%
%
%
%
\section{Sum-of-squares relaxation} \label{sec:sos}
In order to solve the dual problem in \eqref{eq:dual_problem} in practice, it has to be implemented as a finite dimensional problem, e.g., by gridding the space $\RT^2$ and thereby approximating the set $\Omega_\costfunc^\kappa$ using a finite number of constraints. However, for the special case of $\costfunc(\theta,\varphi) = \abs{e^{i\theta}-e^{i\varphi}}^{p}$, for $p \in \RN$, this can also be done using a sum-of-squares (SOS) relaxation (see, e.g., \cite{Dumitrescu06_54}) of \eqref{eq:dual_problem}. For simplicity of notation, we present here the case with $\costfunc(\theta,\varphi) = \abs{e^{i\theta}-e^{i\varphi}}^{2}$ and without $\kappa$ ($\kappa=\infty$), i.e., the dual of \eqref{eq:omt_toeplitz} as formulated in \eqref{eq:dual_problem_no_kappa}. To this end, identify
\begin{align*}
	z = e^{i\theta},\; w = e^{i\varphi},
\end{align*}
which allows us to write
\begin{align*}
	\costfunc(\theta,\varphi) &= 2 - zw^{-1} - z^{-1}w \\
	\Gamma^*(\bLambda_0)(\theta) &= \frac{1}{2\pi}\sum_{k=1}^{n}\sum_{\ell=1}^{n} \left[ \bLambda_0  \right]_{k,\ell} z^{\ell-k} \\
	\Gamma^*(\bLambda_1)(\varphi) &=\frac{1}{2\pi}\sum_{k=1}^{n}\sum_{\ell=1}^{n} \left[ \bLambda_1  \right]_{k,\ell} w^{\ell-k}.
\end{align*}
Thus, the set of constraints defining the feasible set $\Omega_\costfunc$ is given by $\Gamma^*(\bLambda_0)(\theta)+\Gamma^*(\bLambda_1)(\varphi) \le\costfunc(\theta,\varphi)$, for all $\theta,\varphi \in \RT$, or, equivalently,
\begingroup
\small
\begin{align}\label{eq:Ppoly}
2\! - \!zw^{-1} \!- \!z^{-1}w \!-\!\frac{1}{2\pi}\!\sum_{k,\ell=1}^{n}\!\left[ \bLambda_0  \right]_{k,\ell}\!z^{\ell-k} \!-\!\frac{1}{2\pi}\!  \sum_{k,\ell=1}^{n}\!\left[ \bLambda_1  \right]_{k,\ell}\!w^{\ell-k} \geq 0
\end{align}
\endgroup
Note that in the two-dimensional trigonometric polynomial \eqref{eq:Ppoly}, the coefficient for $z^{-k_1}w^{-k_2}$ is equal to
\begin{equation*}
\begin{array}{ll}
		2-\frac{1}{2\pi}\text{diag}\left(\bLambda_0+\bLambda_1\right)^T\mathbf{1}&\text{ for } k_1=k_2 = 0 \\
		-\frac{1}{2\pi}\text{diag}\left( \bLambda_0 , k_1 \right)^T\mathbf{1}& \text{ for } k_1\in \RZ_{n \setminus 0} \text{ and } k_2= 0 \\
		-\frac{1}{2\pi}\text{diag}\left( \bLambda_1 , k_2 \right)^T\mathbf{1}& \text{ for } k_1= 0 \text{ and } k_2 \in \RZ_{n \setminus 0} \\
		-1 & \text{ for } k_1 = 1 \text{ and } k_2 = -1 \\
		-1 & \text{ for } k_1 = -1 \text{ and } k_2 = 1 \\
		0 & \text{ otherwise},
\end{array}
\end{equation*}
where $\RZ_{n \setminus 0} = \left\{ k \in \RZ \mid \abs{k}< n, k \neq 0  \right\}$. Here, $\text{diag}\left(\bX, k \right)$ denotes the column vector containing the elements on the $k$th super-diagonal of the matrix $\bX$, if $k>0$, and the elements on the $k$th sub-diagonal, if $k<0$, with $\mathbf{1}$ denoting a column vector of ones of appropriate dimension.

In order to formulate a computationally feasible optimization problem, we remove the non-negativity constraint for the two-dimensional polynomial in \eqref{eq:Ppoly} and instead impose that it should have a sum-of-squares representation \cite{Dumitrescu06_54},\cite{Dumitrescu05_53}. In particular, we impose that the polynomial in \eqref{eq:Ppoly} should be of the form 
\begin{align}\label{eq:pSOS}
	P(z,w) = \left(\bz^{-1}\right)^T \bQ\bz
\end{align}
where $\bQ\in \RM^{m^2}$ is positive semi-definite,  
\begin{align*}
 	\bz &= \begin{bmatrix} 1 & w & \cdots & w^{m-1}  \end{bmatrix}^T \otimes  \begin{bmatrix} 1 & z & \cdots & z^{m-1}  \end{bmatrix}^T \\
	\bz^{-1} \!&=\! \begin{bmatrix} 1 & w^{-1} & \cdots & w^{-m+1}  \end{bmatrix}^T \!\otimes \!\begin{bmatrix}1&z^{-1} & \cdots &z^{-m+1}\end{bmatrix}^T
\end{align*}
and $\otimes$ is the Kronecker product.
Note that any polynomial on this form is non-negative by definition, and, furthermore, for any non-negative polynomal $P^*$, there is a sequence of polynomials, $P_m$, on the form \eqref{eq:pSOS} such that \mbox{$\norm{P_m- P^*}_\infty \to 0$} as $m\to\infty$.

Next, note that the coefficients of $P$ are associated to the elements of $\bQ$ in \eqref{eq:pSOS} according to 
\begin{align*}
	p_{k_1,k_2} = \tr\left(\bT_{k_1,k_2} \bQ  \right), \quad \mbox{for } -m+1\leq k_1,k_2 \leq m-1
\end{align*}
where 
\begin{align*}
	\bT_{k_1,k_2} = \bT_{k_2} \otimes \bT_{k_1}
\end{align*}
and $\bT_k$ is the matrix with ones on the $k$th diagonal and zeros elsewhere.
%
%
%
\begin{figure}[t!]
        \centering
            \includegraphics[width=.48\textwidth]{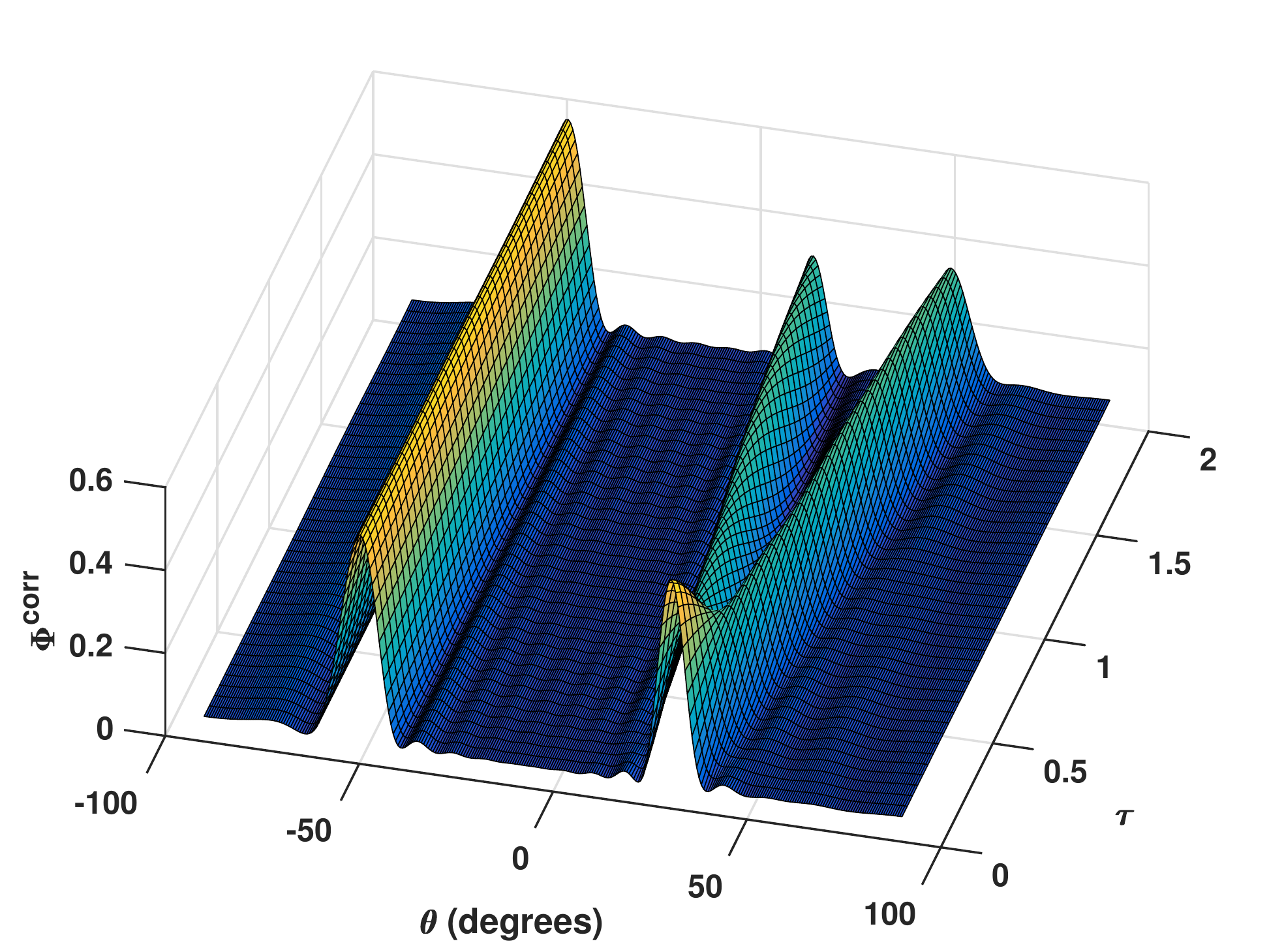}
           \caption{Interpolated spatial spectrum estimated as $\ba(\theta)^H\bR_{\tau}\ba(\theta)$, where $\bR_{\tau}$ is obtained by solving \eqref{eq:omt_toeplitz2}.
           }
            \label{fig:doa_interp_gt}
\vspace{-2mm}\end{figure}
%
%
%
%
Putting these facts together by requiring that the polynomial \eqref{eq:Ppoly} can be written as a sum-of-square \eqref{eq:pSOS}, we approximate \eqref{eq:dual_problem_no_kappa} by the semi-definite program (SDP)
\begingroup
\begin{align}
	&\maxwrt[\substack{\bLambda_0,\bLambda_1 \in \RM^{n} \\\bQ\in\RM^{m^2}}] && \langle \bLambda_0, \bR_0  \rangle + \langle \bLambda_1,\bR_1  \rangle \label{eq:SDPoptimization}\\
	& \text{subject to} &&  \bQ \succeq 0\nonumber \\
	&&&  \tr\left( \bQ  \right) = 2-\frac{1}{2\pi}\text{diag}\left(\bLambda_0 + \bLambda_1\right)^T\mathbf{1} \nonumber\\
	&&& \tr\left(\bT_{-1,1} \bQ  \right) = -1 \nonumber\\
	&&& \tr\left(\bT_{1,-1} \bQ  \right) = -1 \nonumber\\
	&&& \tr\left(\bT_{k_1,0} \bQ  \right) = -\frac{1}{2\pi}\text{diag}\left( \bLambda_0 , k_1 \right)^T \mathbf{1} \:,\; k_1 \in \RZ_{n \setminus 0} \nonumber\\
	&&& \tr\left(\bT_{0,k_2} \bQ  \right) = -\frac{1}{2\pi}\text{diag}\left( \bLambda_1 , k_2 \right)^T \mathbf{1} \:,\; k_2 \in \RZ_{n \setminus 0} \nonumber\\
	&&& \tr\left(\bT_{k_1,k_2} \bQ  \right) = 0 \:,\; \text{if } k_1k_2 = 1 \nonumber\\
	&&&\hspace{3.2cm}\text{ or } 1<\abs{k_1k_2}\leq (m-1)^2. \nonumber
\end{align}
\endgroup
It is worth noting that the $P$ defined by the optimal $\bQ$ in \eqref{eq:pSOS} is non-negative on the unit torus. Thus, for the solution of the SDP problem in \eqref{eq:SDPoptimization}, there will be a corresponding feasible solution of \eqref{eq:dual_problem_no_kappa}. Therefore, any solution to \eqref{eq:SDPoptimization} will give a lower bound for the optimal objective value of \eqref{eq:dual_problem_no_kappa}. 
 However, any non-negative polynomial may be arbitrarily well approximated by $P$ of the form \eqref{eq:pSOS} by a suitable choice the degree $m-1$, and thus the maximal objective value of \eqref{eq:SDPoptimization} will converge to the maximal objective value of \eqref{eq:dual_problem_no_kappa} as $m$ grows.
By comparison, directly discretizing $\RT$, thereby  approximating $\Omega_\costfunc$ using a finite number of constraints, yields an upper bound for the maximal objective value of \eqref{eq:dual_problem_no_kappa}. In this case, the maximal objective value will also converge to that of \eqref{eq:dual_problem_no_kappa}, this time from above, as the spacing of the discretization of the grid becomes finer.
%
%
%
%
\section{Numerical examples}\label{sec:num}
In this section, we present some numerical examples illustrating different aspects and application areas of the proposed distance notion $\Omtcov_\kappa$, as well as the interpolant $\bR_\tau = \ccI_\tau(M)$. Throughout these examples, we will use the cost function $\costfunc(\theta,\varphi) = \abs{e^{i\theta}-e^{i\varphi}}^2$. It should be stressed that many other choices of cost functions are possible, allowing for flexibility in modeling the specific scenario one is interested in. Apart from specific modeling aspects, one may preferably pick cost functions satisfying the assumptions in Propositions~\ref{prop:semi-metric} and \ref{prop:contractive_prop}.
\subsection{Trajectory example}
In order to illustrate the behavior of the proposed interpolation method, we consider a simple scenario with covariance matrices of size $n = 2$. Consider covariance matrices of the form
\begin{align} \label{eq:basic_example_R0_R1}
	\bR_0 = \begin{bmatrix} 1 & r \\ r & 1 \end{bmatrix},\; \bR_1 = \begin{bmatrix} 1 & re^{i\frac{5\pi}{6}} \\ r e^{-i\frac{5\pi}{6}}& 1 \end{bmatrix}
\end{align}
where $r \in [-1,1]$. Thus, considering interpolating paths $\bR_\tau$, these will be on the form
\begin{align}
	\bR_\tau = \begin{bmatrix} 1 & r_{1,\tau} \\ \bar{r}_{1,\tau} & 1 \end{bmatrix}.
\end{align}
Figure~\ref{fig:basic_example} displays the real and imaginary part of $r_{1,\tau}$ for $\tau \in [0,1]$ when varying the magnitude $r$ of the off-diagonal element of $\bR_0$ and $\bR_1$ between 0 and 1. As can be seen, the trajectories of $r_{1,\tau}$ approximately correspond to convex combinations $(1-\tau)r + \tau re^{i\frac{5\pi}{6}}$ when $r$ is close to zero, whereas they are considerably curved for $r$ closer to 1. It may be noted that for the singular case, i.e., $r = 1$, the trajectory of $r_{1,\tau}$ coincides with the unit circle. Thus, we see that for covariance matrices that have consistent spectra that are essentially flat, the interpolant $\bR_\tau$ will be approximately equal to the convex combination $(1-\tau)\bR_0 + \tau\bR_1$. At the other extreme, the interpolant corresponding to covariance matrices $\bR_0$ and $\bR_1$ that have consistent spectra that are close to being singular will also have almost singular consistent spectra.
%
\begin{figure}[t!]
        \centering
            \includegraphics[width=.48\textwidth]{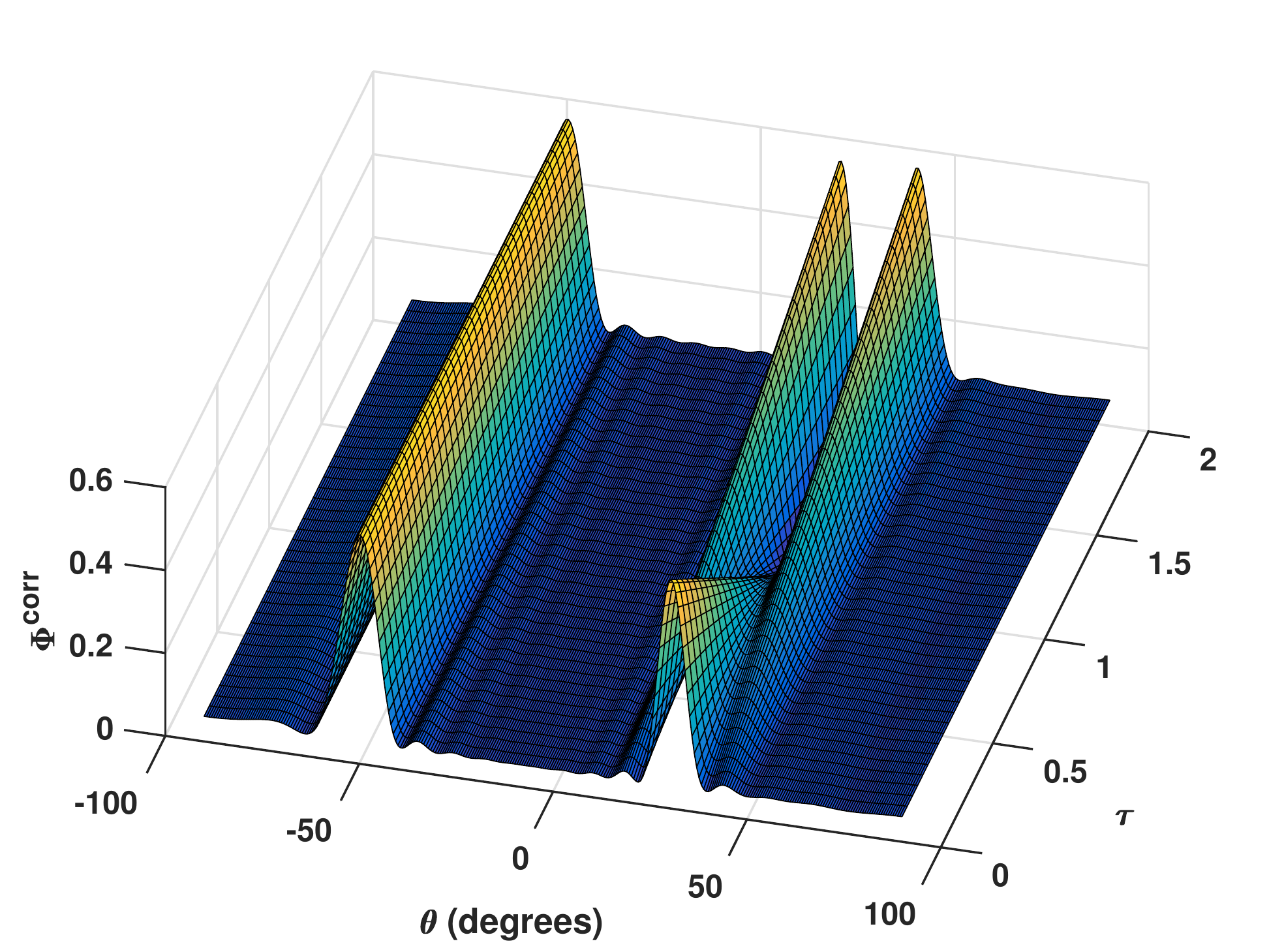}
           \caption{Interpolated spatial spectrum estimated as $\ba(\theta)^H\bR^{\rm conv}_\tau\ba(\theta)$, where $\bR_\tau^{\rm conv}$ is the linear combination of $\bR_0$ and $\bR_1$.
           }
            \label{fig:doa_interp_conv}
\vspace{-2mm}\end{figure}
%
%
%
%
%
\subsection{Interpolation and extrapolation for DOA}\label{ex:DOA}
Next, we illustrate the proposed methods ability to produce interpolants $\bR_\tau$ that are consistent with locally linear changes in the frequency location of spectral power. This is illustrated using a direction-of-arrival (DOA) estimation problem.
Consider a uniform linear array (ULA) with $15$ sensors with half-wavelength sensor spacing and a scenario where two covariance matrices
\begin{align*}
	\bR_0 &=\frac{1}{2}\sum_{\ell=1}^2\ba(\theta_\ell^{(0)})\ba(\theta_\ell^{(0)})^H + \sigma^2 \bI \\
	\bR_1 &=\frac{1}{2}\ba(\theta_1^{(1)})\ba(\theta_1^{(1)})^H+\frac{1}{4}\sum_{\ell=2}^3\ba(\theta_\ell^{(1)})\ba(\theta_\ell^{(1)})^H + \sigma^2 \bI
\end{align*}
are available\footnote{Note that $\theta$ in this example denotes spatial frequency. For simplicity, we retain the notation $\ba(\theta)$ also for this case.}. Here $\theta_1^{(0)} = \theta_1^{(1)} =-50\si{\degree}$, $\theta_2^{(0)} = 30\si{\degree}$, $\theta_2^{(1)} = 20\si{\degree}$, and $\theta_3^{(1)} = 40\si{\degree}$, and $\sigma^2 =0.05$. Such a scenario may be interpreted as a target at $\theta_2^{(0)}$ splitting up into two targets at $\theta_2^{(1)} $ and $\theta_3^{(1)}$ as time progresses, whereas the target at $\theta_1^{(0)}$ stays put.
%
%
%
\begin{figure}[t!]
        \centering
            \includegraphics[width=.48\textwidth]{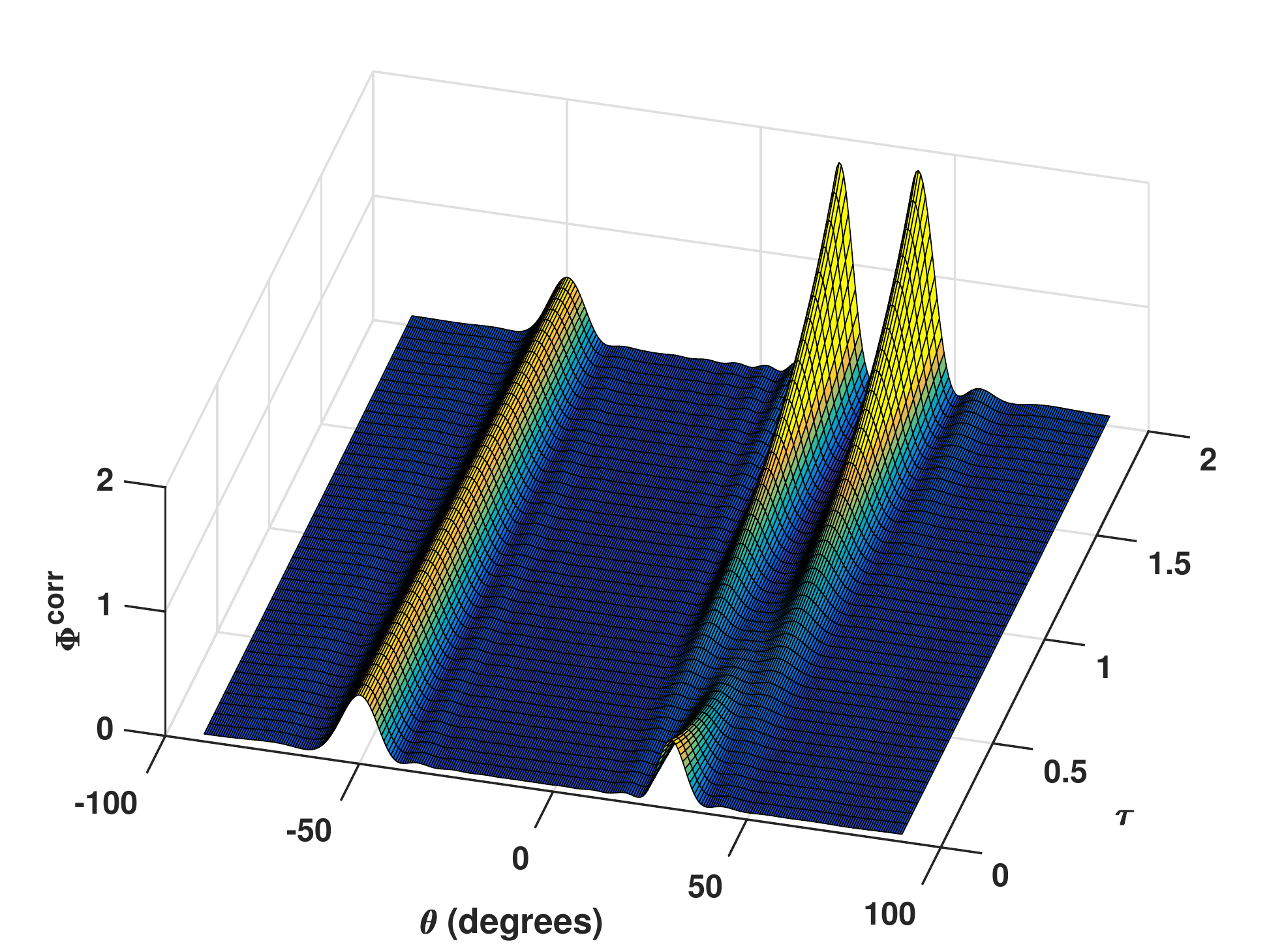}
           \caption{Interpolated spatial spectrum estimated as $\ba(\theta)^H\tilde{\bR}_\tau\ba(\theta)$, where $\tilde{\bR}_\tau$ is obtained from \eqref{eq:R_tilde}.
           }
            \label{fig:doa_interp_tilde}
\vspace{-2mm}\end{figure}
%
%
%
%
%
We use the proposed method in \eqref{eq:omt_toeplitz2}, as $\bR_0$ and $\bR_1$ have the same diagonal, in order to find the optimal transport map $M$. Then, using \eqref{eq:geodesic}, we compute covariance matrices $\bR_\tau$, for $\tau \in [0,2]$, i.e., we both interpolate on $\tau \in [0,1]$ and extrapolate on $\tau \in (1,2]$. This is then compared to the basic interpolant $\bR^{\rm conv}_{\tau}$ based on convex and linear combinations of $\bR_0$ and $\bR_1$, as well as the more sophisticated covariance matrix geodesics $\tilde{\bR}_\tau$ and $\breve{\bR}_\tau$, as defined in \eqref{eq:R_tilde} and \eqref{eq:R_breve}, respectively. For these four cases, we then estimate the corresponding inter- and extrapolated spectra using the correlogram, i.e., as
\begin{align} \label{eq:corr_gram}
	\Phi^{\text{corr}}(\bX,\theta)  = \ba(\theta)^H\bX\ba(\theta)
\end{align}
where $\bX$ is substituted for the four different covariance interpolants. The results for the proposed interpolant $\bR_\tau$ are shown in Figure~\ref{fig:doa_interp_gt}. As can be seen, $\bR_\tau$ indeed models a scenario where one of the targets has a constant location, whereas the second target splits upp into two smaller targets. Note also that $\bR_\tau$ implies that the smaller target continue linearly with respect to the look-angle, $\theta$, also for the extrapolation case, i.e., for $\tau>1$.
In contrast, the convex (linear for $\tau>1$) combination $\bR^{\rm conv}$, as shown in Figure~\ref{fig:doa_interp_conv}, display undesirable behavior; clear fade-in fade-out effects are visible, and non-negativity is violated as $\tau$ approaches $2$ due to the fact that $\bR^{\rm conv}$ becomes indefinite. Similar objections may be raised against the geodesics $\tilde{\bR}_\tau$ and $\breve{\bR}_\tau$, shown in Figures~\ref{fig:doa_interp_tilde} and \ref{fig:doa_interp_breve}, both displaying fade-in fade-out effects and thereby fail to model any displacement of the targets. Also, note that the total power of the signal varies greatly as $\tau$ goes from 0 to 2, especially for~$\breve{\bR}_\tau$.
%
%
%
\begin{figure}[t!]
        \centering
            \includegraphics[width=.48\textwidth]{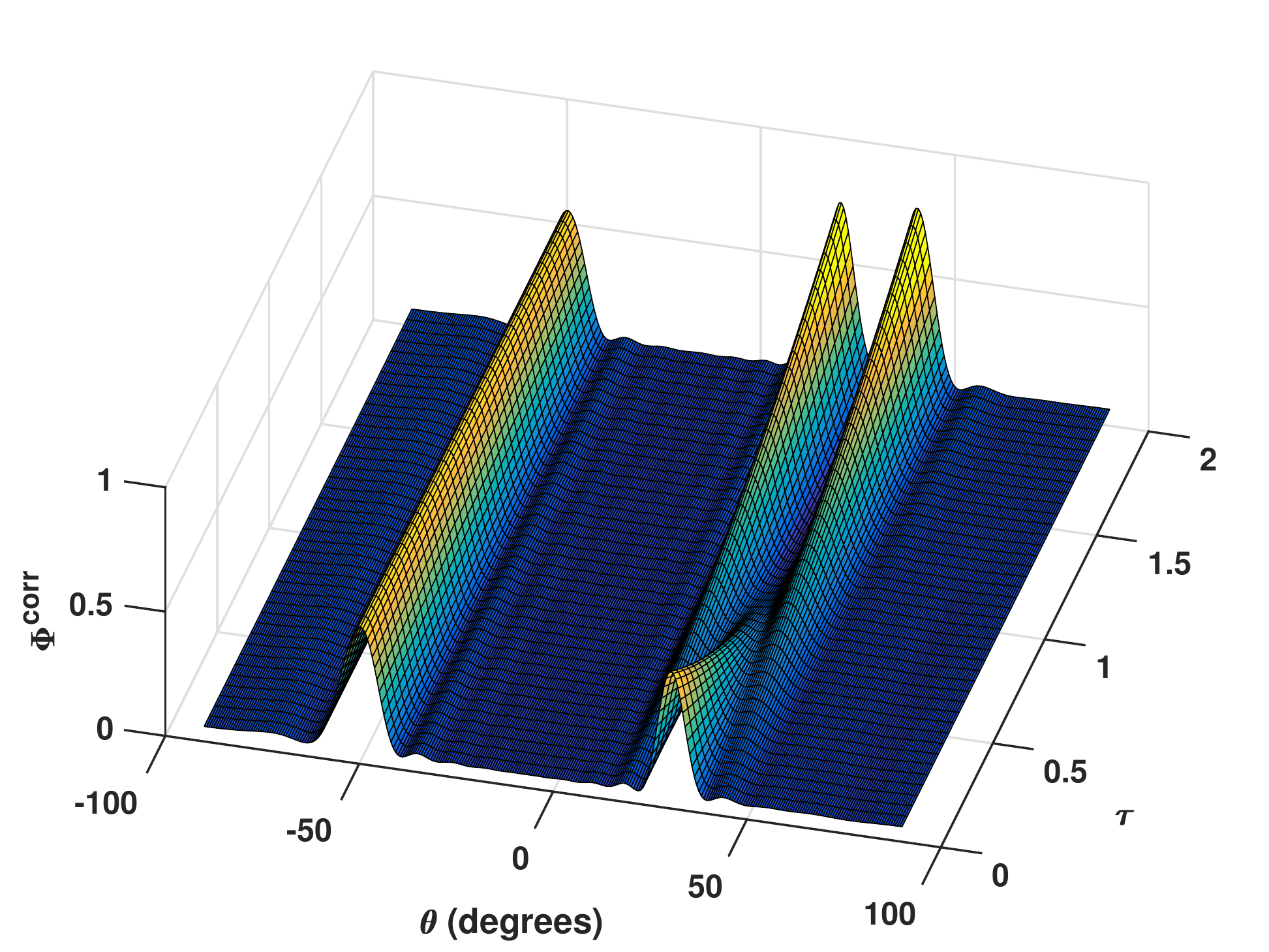}
           \caption{Interpolated spatial spectrum estimated as $\ba(\theta)^H\breve{\bR}_\tau\ba(\theta)$, where $\breve{\bR}_\tau$ is obtained from \eqref{eq:R_breve}.
           }
            \label{fig:doa_interp_breve}
\vspace{-2mm}\end{figure}
%
%
%
\vspace{-2mm}
\subsection{Tracking of an AR-process}\label{ex:tracking}
Next, we illustrate the approach in \eqref{eq:omt_tracking}  
for the tracking of signals with slowly varying spectral content. To this end, consider a complex autoregressive (AR) process with one complex, time-varying pole.
The pole is placed at a constant radius of 0.9, and moves from the frequency $0.3\pi$ to $0.6\pi$.
Spectral estimates based directly on covariance matrix estimates $\hat{\bR}$ are shown in the top plot of Figure~\ref{fig:ar_geodesic_tsp_n15_2_both_reg_and_direct}. These covariance matrix estimates are obtained as the outer product estimate, based on 150 samples each, where the overlap between each estimate is 80 samples. Each estimated covariance matrix is of size $n = 15$. As can be seen, the spectral estimates are very noisy and vary greatly in power. Using five of these covariance matrix estimates $\hat{\bR}$, evenly spaced throughout the signal, we solve \eqref{eq:omt_tracking} with $\lambda = \frac{1}{2n^2}$ in order to obtain an estimated covariance path, $\bR_\tau$. The resulting spectra, estimated using \eqref{eq:corr_gram}, are shown in the bottom plot of Figure~\ref{fig:ar_geodesic_tsp_n15_2_both_reg_and_direct}. As can be seen, the path resulting from the proposed method allows for a smooth tracking of the shift in spectral content.
For comparison, Figure~\ref{fig:ar_geodesic_Euclid_and_logEuclid} displays the corresponding spectral estimates obtained from fitting geodesics induced by the Euclidean metric (in the top plot) as well as the log-Euclidean metric (in the bottom plot) to the same covariance estimates. Specifically, for the Euclidean case, the geodesic is constructed as $\bR_\tau = (1-\tau)\bR_0 + \tau\bR_1$, where $(\bR_0,\bR_1)$ solves 
\begin{align}
	\argminwrt[\bR_0\succeq 0, \bR_1\succeq 0] & \sum_{j=1}^J\norm{(1-\tau)\bR_0+\tau\bR_1 -\hat{\bR}_{\tau_j}}_F^2,
\end{align}
whereas for the log-Euclidean metric (see, e.g., \cite{JayasumanaHSLH13_cvpr}), the geodesic is given by \eqref{eq:R_log_euclid}, where $(\bR_0,\bR_1)$ solves
\begin{align}
	\argminwrt[\bR_0\succ 0, \bR_1\succ 0] & \sum_{j=1}^J\norm{(1\!-\!\tau)\log\left(\bR_0\right)\!+\!\tau\log\left( \bR_1\right)\!-\!\log\left(\hat{\bR}_{\tau_j}\right)}_F^2.
\end{align}
As remarked earlier, we here require $\bR_0$ and $\bR_1$ to be positive definite in order for the geodesic to be defined. It may also be noted that this approach requires all estimated covariance matrices $\hat{\bR}_{\tau_j}$ to have full rank. As may be seen in Figure~\ref{fig:ar_geodesic_Euclid_and_logEuclid}, both these fitted geodesics imply spectral estimates displaying significant fade-in fade-out effects, which should be contrasted with the proposed method's ability to here produce a reasonable and intuitive interpolation.
%
%
%
\begin{figure}[t]
        \centering
            \includegraphics[width=.46\textwidth]{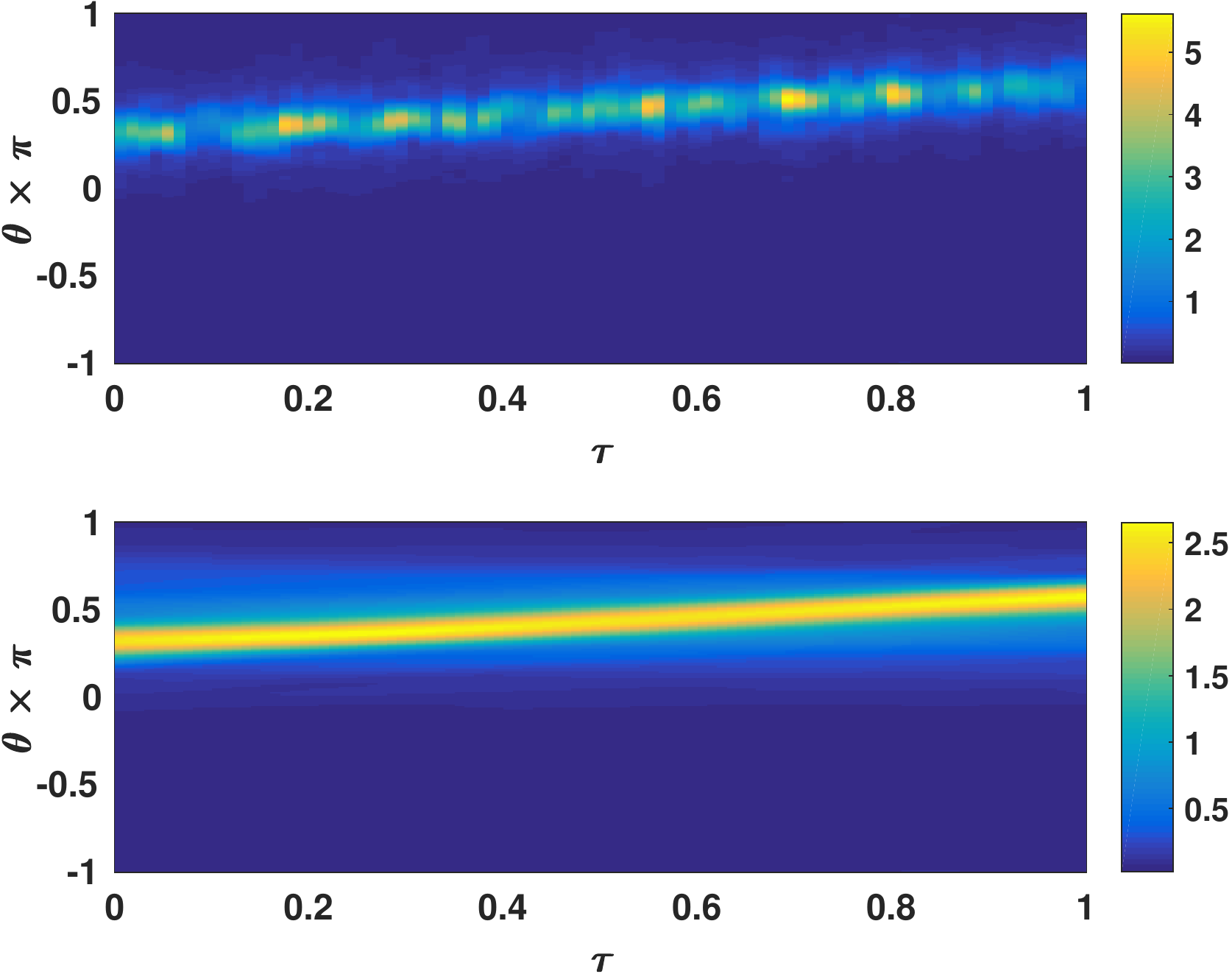}
           \caption{Spectrum estimated as $\ba(\theta)^H\hat{\bR}\ba(\theta)$, where $\hat{\bR}$ is estimated as the sample covariance matrix based on 100 samples in each window (top plot), as well as $\ba(\theta)^H\bR_\tau\ba(\theta)$, where $\bR_\tau$ is obtained by solving \eqref{eq:omt_tracking}, fitted to a sequence of five covariance estimates (bottom plot).}
            \label{fig:ar_geodesic_tsp_n15_2_both_reg_and_direct}
\vspace{-2mm}\end{figure}
%
%
\subsection{K-means clustering}\label{ssec:Clustering}
As a simple illustration of how to utilize the barycenter formulation in Section~\ref{sec:barycenter}, we consider the application of unsupervised clustering of phonemes. Specifically, we consider 7 utterances; 3 utterances of the phoneme \textit{/ae/}, 2 utterances of \textit{/oy/}, and 2 utterances of \textit{/n/} taken from an annotated recording sampled at 16 kHz, with the durations of the different phonemes varying between 30 ms and 174 ms. For each utterance, we estimate an $n\times n$ covariance matrix of size $n = 10$, which is a quite common covariance matrix size in speech coding applications (see, e.g., \cite{ChristensenJ09}), and then run a K-means algorithm that alternates between classifying each estimated covariance matrix according to \eqref{eq:omt_barycenter_classify} and computing new barycenters according to \eqref{eq:omt_barycenter}. To ensure that the classification is unaffected by differences in signal power, which potentially could be a dominating factor, each covariance matrix estimate is normalized as to have its diagonal elements, i.e., $r_0$, equal to one\footnote{To preserve the diagonal elements of each matrix in this setting, any \mbox{$\kappa > \max_{\theta, \varphi} c(\theta,\varphi)$} constitues a valid choice.}. We initiate the algorithm by choosing initial barycenters as convex combinations of the available covariance matrix estimates, and demand separation into three clusters. The algorithm is then run until convergence, i.e., until the classification has stabilized. 
%
%
\begin{figure}[t!]
        \centering
            \includegraphics[width=.45\textwidth]{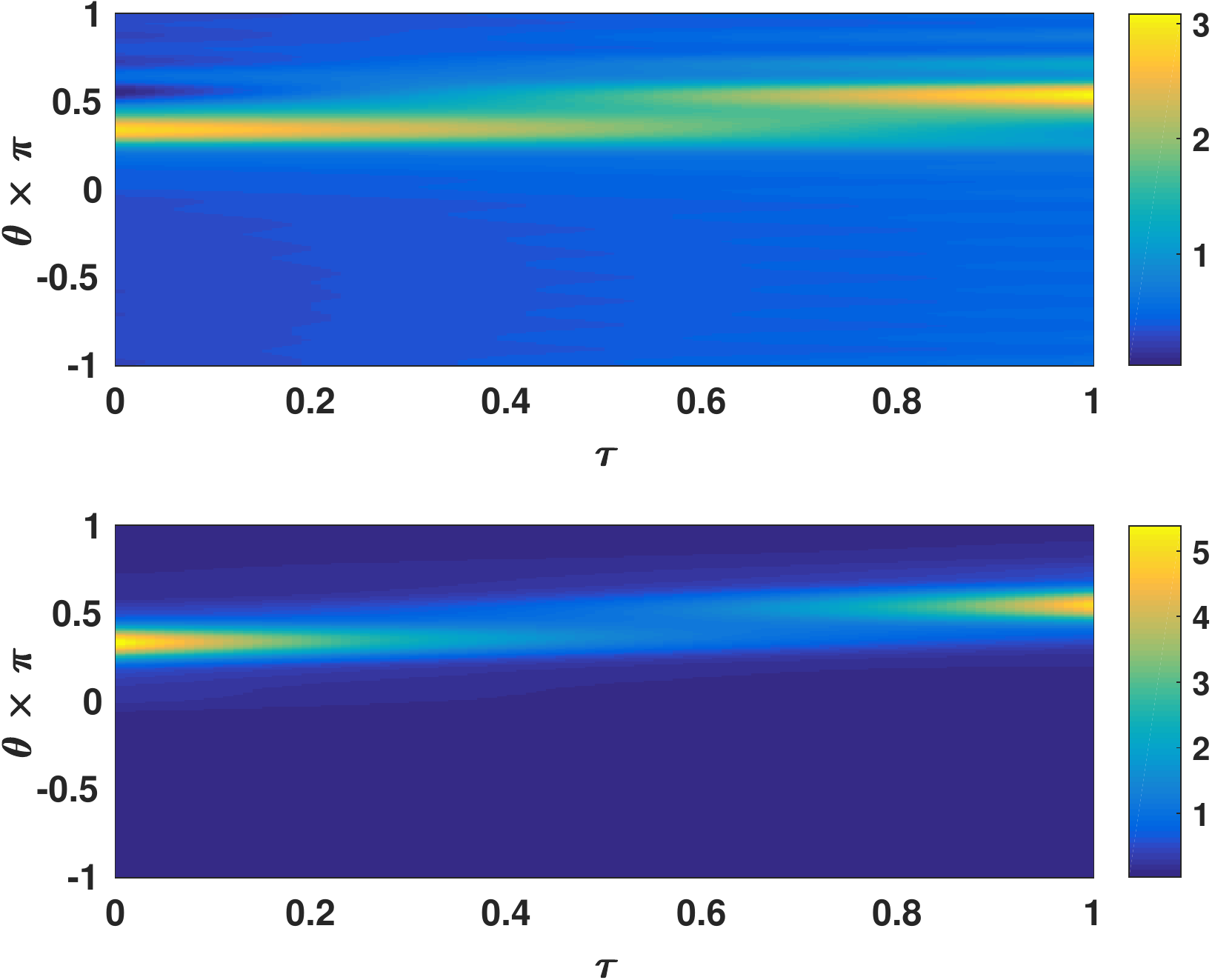}
           \caption{Spectrum estimated as $\ba(\theta)^H\bR_\tau\ba(\theta)$, where $\bR_\tau$ is the geodesic induced by the Euclidean metric (top plot) and the log-Euclidean metric (bottom plot), when fitted to a sequence of five covariance estimates.}
            \label{fig:ar_geodesic_Euclid_and_logEuclid}
\vspace{-2mm}\end{figure}
%
%
%
As a comparison, we perform the same K-means clustering using the Euclidean and the log-Euclidean metrics as described earlier, as well as the Kullback-Leibler divergence and the ellipticity distance measure introduced in \cite{OllilaSTW16_arXiv} in order to compute distances as well as barycenters.
Specifically, for two positive definite matrices $\bR_0$ and $\bR_1$ of size $n$, the Kullback-Leibler divergence is given by
\begin{align*}
	d_{\rm KL}(\bR_0,\bR_1)=\trace{\bR_0^{-1}\bR_1}-\log\abs{\bR_0^{-1}\bR_1}-n,
\end{align*}
and the ellipticity distance is given by
\begin{align*}
	d_{\rm E}(\bR_0,\bR_1)=n\log\left( \frac{1}{n}\trace{\bR_0^{-1}\bR_1}\right)-\log\abs{\bR_0^{-1}\bR_1}.
\end{align*}
As may be noted, these distance notions are only defined for non-singular matrices $\bR_0$ and $\bR_1$. For a set of $J$ observed covariances $\bR_j$, $j = 1,\ldots,J$, the barycenter induced by the Kullback-Leibler distance is given by
\begin{align*}
	\bR_{\rm KL}^{\text{bary}} = \left( \frac{1}{J} \sum_{j=1}^J \bR_j^{-1} \right)^{-1},
\end{align*}
whereas for the ellipticity distance, the barycenter is the solution to the fixed point equation (see \cite{OllilaSTW16_arXiv})
\begin{align*}
	\bR_{\rm E}^{\text{bary}} = \left( \frac{n}{J} \sum_{j=1}^J \frac{\bR_j^{-1}}{\trace{\bR_j^{-1} \bR_{\rm E}^{\text{bary}}}}  \right)^{-1}.
\end{align*}
The solution $\bR_{\rm E}^{\text{bary}}$ is unique only up to a positive scaling factor, and thus, we here normalize $\bR_{\rm E}^{\text{bary}}$ as to have unit diagonal.
It may be noted, that even for Toeplitz $\bR_j$, $j = 1,\ldots, J$, the barycenters $\bR_{\rm KL}^{\text{bary}}$ and $\bR_{\rm E}^{\text{bary}}$ are not Toeplitz in general.
As the K-means algorithm may converge to different solutions, i.e., different sets of clusters, depending on the choice of initial points, we have for each choice of distance measure run the algorithm several times using different starting points and selected the solution corresponding to the least total distance between each barycenters and its assigned covariance matrices. The results are shown in Table~\ref{table:classified_phonemes}. 
As can be seen, the proposed distance notion $\Omtcov_\kappa$ produces a clustering in which the third instance of the \textit{/ae/} phoneme is erroneously grouped together with the two utterances of \textit{/oy/}; apart from this, the clustering corresponds well to the true phonemes. The comparison distance measures do in this example produce clusterings that differ more from the ground truth than the clustering using the proposed distance notion. It may also be noted that the Euclidean and log-Euclidean produce the same clustering.
Table~\ref{table:phonemes_distances} presents the corresponding distance matrix, i.e., the matrix detailing the distance $\Omtcov_\kappa$ between each covariance matrix estimate and each barycenter. Note that for each utterance, the distances have been normalized by the least distance for that utterance. It is worth noting that the clusters are quite well-separated.
Although being limited  in scope, the example illustrates that the proposed distance notion may indeed be used in order to perform clustering and classification of stochastic processes based on their estimated covariances.
%
%
\begin{table}[t!]
\centering
\begin{tabular}{c|c|c|c|c|c|c|c}
\toprule
\toprule
Utterance & 1 & 2 & 3 & 4 & 5 & 6 & 7 \\
Phoneme & \textit{/ae/} & \textit{/ae/}& \textit{/ae/} &\textit{/oy/} & \textit{/oy/} & \textit{/n/} & \textit{/n/}  \\ 
\toprule
Proposed & 1 & 1 & 2 & 2 & 2 & 3 & 3 \\ 
Euclidean & 1 & 1 & 1 & 1 & 2 & 2 & 3 \\
Log-Euclidean & 1 & 1 & 1 & 1 & 2 & 2 & 3 \\
KL divergence & 1 & 1 & 2 & 1 & 3 & 3 & 3 \\
Ellipticity & 1 & 1 & 2 & 1 & 3 & 3 & 3 \\
\bottomrule
\bottomrule
\end{tabular}
\vspace{2mm}
\caption{Clustering of 7 utterances into three clusters using a K-means algorithm utilizing the barycenter formulation in Section~\ref{sec:barycenter}, as well as four comparison distance measures. The third to seventh rows indicate the identified classes as given by the algorithm using the different distance measures.} \label{table:classified_phonemes}
\vspace{-2mm}
\end{table}
%
%
\begin{table}[t]
\vspace{5mm}
\centering
{\renewcommand{\arraystretch}{1}
\begin{tabular}{c|c|c|c|c|c|c|c}
\toprule
\toprule
Utterance & 1 & 2 & 3 & 4 & 5 & 6 & 7 \\
Class 1 &1 &  1 &  4.20 &  7.10 &  5.02  & 10.54  & 18.67  \\ 
Class 2 & 8.79 &   7.83  &  1 &  1 &   1  & 4.70  &  9.99 \\
Class 3 & 20.51  &  24.93 &   8.30  &  12.05  &  1.40  &  1  &  1 \\
\bottomrule
\bottomrule
\end{tabular}}
\vspace{4mm}
\caption{Distance, as measured by $\Omtcov_\kappa$, between each utterance and each barycenter for the three identified clusters. Each distance has been normalized by the least distance for each utterance.} \label{table:phonemes_distances}
\vspace{-2mm}
\end{table}
%
%
%
\subsection{Fixed cost for moving mass}
In some scenarios, there might be a relatively large noise component present in both $\bR_0$ and $\bR_1$, where the noise power is localized in frequency and this localization is the same in both $\bR_0$ and $\bR_1$. This might be the case in, e.g., DOA estimation scenarios where a source of interest is moving in the presence of a stationary interferer. Then, if the source for example moves past the location of the interferer, the optimal transport problem in \eqref{eq:omt_toeplitz2} may result in a power association such that the source and the interferer are mixed together. Such a scenario is illustrated in Figure~\ref{fig:doa_interferer}, showing estimated spectra obtained from the interpolant $\bR_\tau$, as given by \eqref{eq:omt_toeplitz2} with the cost function $\costfunc(\theta,\varphi) = \abs{e^{i\theta}-e^{i\varphi}}^2$. Here, the source is moving from look-angle $\theta = 30\si{\degree}$ to $\theta = -20\si{\degree}$, whereas there is a fixed interferer located at $\theta = 0\si{\degree}$ having a third of the power of the source.
%
%
%
%
%
\begin{figure}[t!]
        \centering
            \includegraphics[width=.48\textwidth]{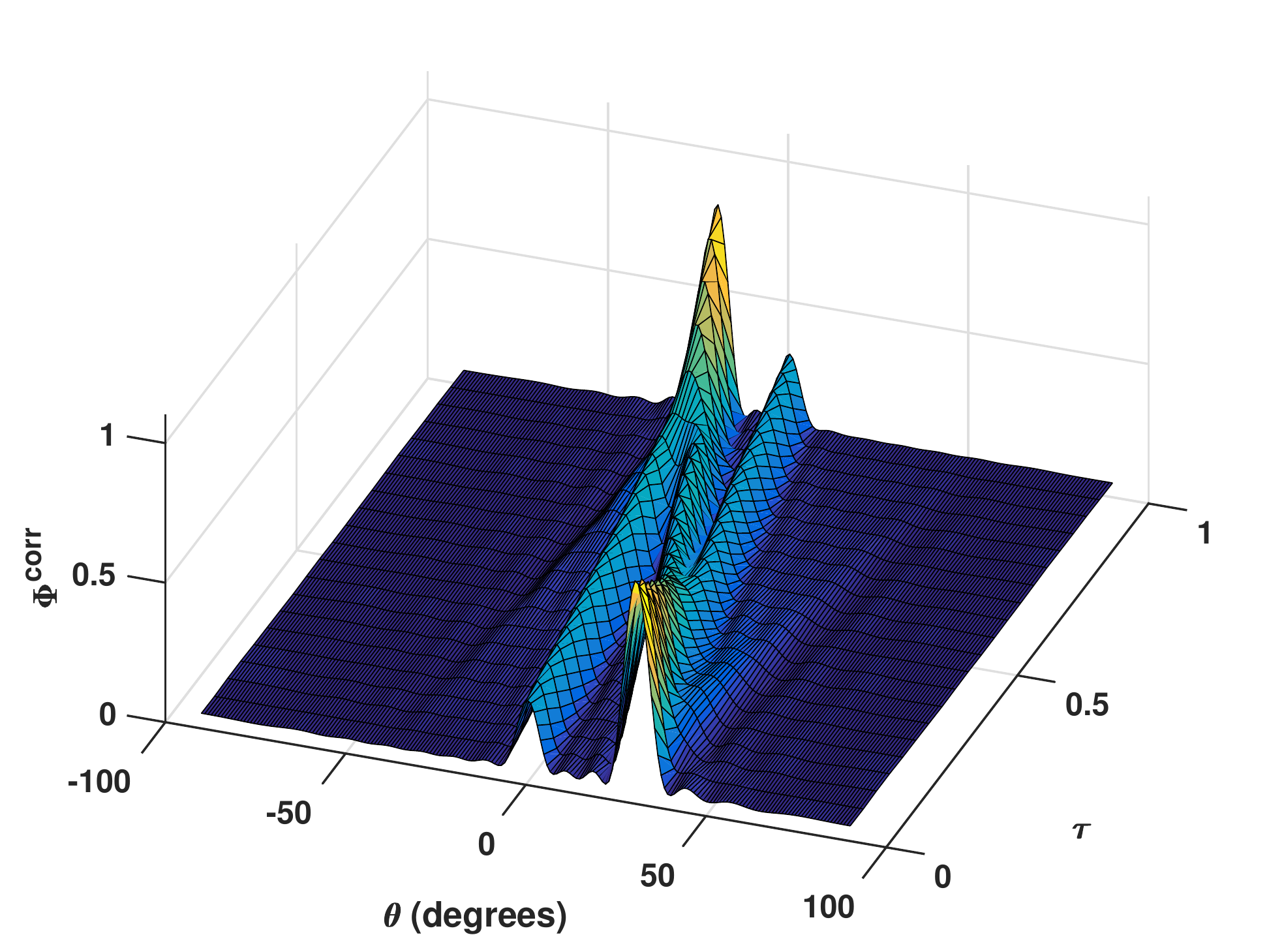}
           \caption{Interpolated spatial spectrum estimated as $\ba(\theta)^H\bR_{\tau}\ba(\theta)$, where $\bR_{\tau}$ is obtained by solving \eqref{eq:omt_toeplitz2}, for the case of one moving source and one static interferer, using with the cost function $\costfunc(\theta,\varphi) = \abs{e^{i\theta}- e^{i\varphi}}^2$.
           }
            \label{fig:doa_interferer}
\vspace{-2mm}\end{figure}
%
%
\begin{figure}[t!]
        \centering
        \vspace{5 mm}
            \includegraphics[width=.45\textwidth]{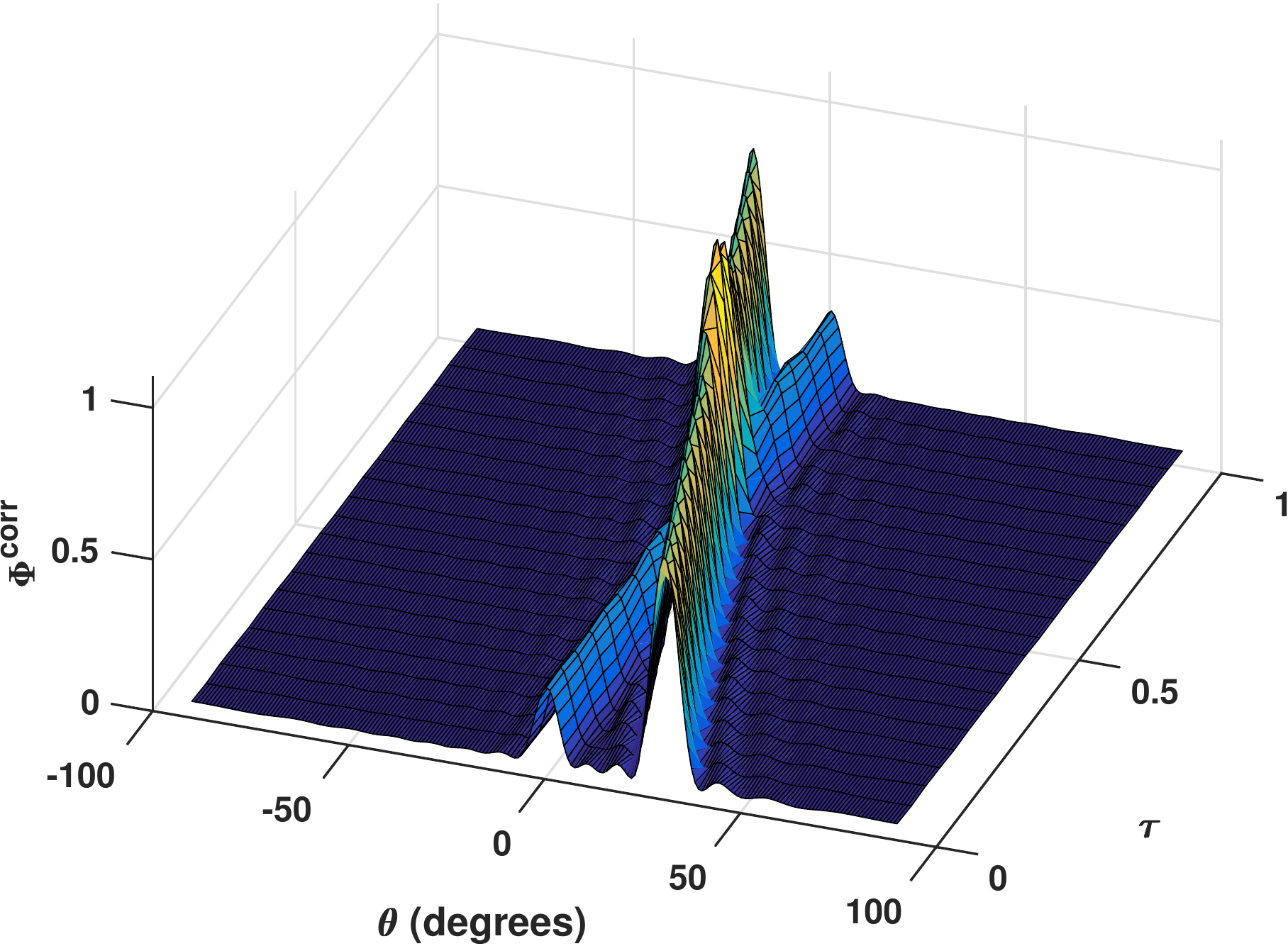}
           \caption{Interpolated spatial spectrum estimated as $\ba(\theta)^H\bR_{\tau}\ba(\theta)$, where $\bR_{\tau}$ is obtained by solving \eqref{eq:omt_toeplitz2}, for the case of one moving source and one static interferer, using with the modified cost function in \eqref{eq:mod_cost}.
           }
            \label{fig:doa_interferer_fixed_cost}
\vspace{-2mm}
\end{figure}
%
%
In order to avoid this type of problem and promoting transport plans that avoid transporting stationary masses, the cost function may be modified according to
\begin{align} \label{eq:mod_cost}
\costfunc(\theta, \varphi)=\begin{cases} 1+|e^{i\theta}- e^{i\varphi}|^2 & \mbox{if }\theta\neq \varphi\\
0 & \mbox{if } \theta= \varphi.
\end{cases}
\end{align}
Thus, the cost function is here formulated such that there is a fixed, baseline cost of moving any mass. The resulting estimated spectra obtained from the interpolant $\bR_\tau$ resulting from solving \eqref{eq:omt_toeplitz2} using this modified cost function is shown in Figure~\ref{fig:doa_interferer_fixed_cost}. As can be seen, the source and the interferer are now well separated throughout the interpolated path.
\section{Conclusions and further directions}\label{sec:conclusions}
In this work, we have proposed a notion of distance for positive semi-definite Toeplitz matrices. By considering spectral representations of such matrices, the proposed measure is based on distances, in an optimal mass transport sense, between families of spectra consistent with the Toeplitz matrices. We have shown that the proposed distance measure, under some mild assumptions, is contractive with respect to additive and multiplicative noise. The proposed measure may be used to, for example, define inter- and extrapolation of Toeplitz matrices, being of interest in applications such as tracking of slowly varying signals. 

A future direction of this methodology is to generalize the distance measure for structured matrices such as Toeplitz-block-Toeplitz matrices and input-to-state covariances in the THREE framework for spectral analysis \cite{ByrnesGL00_48}.
The latter generalizes ideas from beamspace processing, enabling 
the user to improve resolution and robustness in power spectral estimation over selected frequency bands  \cite{AminiG06_54, KarlssonG13_58}.
This will be the subject of further research.
%
%
%
\appendix
\subsection{Proof of Proposition Proposition \ref{prop:semi-metric}}
Positivity and symmetry of $\Omtcov_\kappa$ follows directly from positivity and symmetry of $\costf$. Further, is clear that $\Omtcov_\kappa(\bR_0,\bR_1) = 0$ if $\bR_0 = \bR_1$. Thus, it remains to show that 
$\Omtcov_\kappa(\bR_0,\bR_1) = 0$ implies that $\bR_0 = \bR_1$. Next, if the objective function 
\eqref{eq:omt_toeplitz_kappa2} is equal to zero, then the first term is zero and since $c$ is a semi-metric, the transport plan $M$ only has support on $\theta=\varphi$. Since the second and third terms of \eqref{eq:omt_toeplitz_kappa2} are zero, it follows that $\Psi_0=\Psi_1$, and hence $\bR_0 = \bR_1$.
\hfill$\square$
%
%
\subsection{Proof of Proposition \ref{prop:contractive_prop} } \label{appendix:proof_add}
First, we show that $\Omtcov_\kappa$ is contractive with respect to additive noise.
Consider two processes with covariance matrices $\bR_0$ and $\bR_1$ and assume that they are both additively corrupted by an independent noise process with covariance $\bR_w$. This results in processes with covariances $\bR_0' = \bR_0 + \bR_w$ and \mbox{$\bR_1' = \bR_1 + \bR_w$}, respectively. From the dual formulation in \eqref{eq:dual_problem}, the distance between these covariance matrices is given by
\begin{align*}
	\Omtcov_\kappa(\bR_0',\bR_1')=&\max_{\left(\bLambda_0,\bLambda_1\right) \in \Omega_\costfunc^\kappa}  \langle \bLambda_0, \bR_0  \rangle + \langle \bLambda_1, \bR_1  \rangle \\
	& \qquad\qquad\qquad+  \langle \bLambda_0 \!+\! \bLambda_1, \bR_w  \rangle.
\end{align*}
Let $\Phi_w$ be any spectrum consistent with the noise covariance $\bR_w$, i.e., with $\Gamma(\Phi_w) = \bR_w$.  We then have that
\begin{align*}
	\langle \bLambda_0 + \bLambda_1, \bR_w  \rangle &= \langle \bLambda_0 + \bLambda_1, \Gamma(\Phi_w) \rangle \\&= \langle \Gamma^*(\bLambda_0) + \Gamma^*(\bLambda_1), \Phi_w \rangle\\
	&= \int_{\RT} \left( \Gamma^*(\bLambda_0)(\theta) + \Gamma^*(\bLambda_1)(\theta) \right) \Phi_w(\theta) d\theta.
\end{align*}
As $\left(\bLambda_0,\bLambda_1\right) \in \Omega_\costfunc^\kappa$, it holds that
\begin{align*}
	\Gamma^*(\bLambda_0)(\theta) + \Gamma^*(\bLambda_1)(\theta) \leq \costfunc(\theta,\theta) = 0 \mbox{ for all } \theta \in \RT
\end{align*}
and since $\Phi_w\ge 0$, we get
\begin{align*}
	\langle \bLambda_0 + \bLambda_1, \bR_w  \rangle \leq 0.
\end{align*}
Hence, it follows that
\begin{align*}
	\Omtcov_\kappa(\bR_0',\bR_1')\!\le\!\max_{\left(\bLambda_0,\bLambda_1\right) \in \Omega_\costfunc}\!\langle \bLambda_0, \bR_0  \rangle\!+\!\langle \bLambda_1, \bR_1  \rangle\!=\!\Omtcov_\kappa(\bR_0,\bR_1).
\end{align*}
Next, we show that $\Omtcov_\kappa$ is contractive with respect to multiplicative noise.
Let the noise covariance matrix be $\bR_w$, implying that the covariances of the contaminated processes are \mbox{$\bR_0' = \bR_0 \odot \bR_w$} and $\bR_1' = \bR_1 \odot \bR_w$, respectively. Let $\Phi_w$ be any spectrum consistent with $\bR_w$, i.e., $\Gamma(\Phi_w) = \bR_w$. Also, let the diagonal elements of $\bR_w$ be smaller than or equal to 1, so that
\begin{align*}
	\int_\RT \Phi_w(\theta) d\theta \leq 2\pi.
\end{align*}
We have
\begin{align*}
	\Omtcov_\kappa(\bR_0',\bR_1')=&\max_{\left(\bLambda_0,\bLambda_1\right) \in \Omega_\costfunc^\kappa} \quad \langle \bLambda_0, \bR_0\!\odot\!\bR_w  \rangle\!+\!\langle \bLambda_1, \bR_1\!\odot\!\bR_w  \rangle \\
	=& \max_{\left(\bLambda_0,\bLambda_1\right) \in \Omega_\costfunc^\kappa} \quad \langle \bLambda_0\!\odot\!\overline{\bR_w}, \bR_0  \rangle\!+\!\langle \bLambda_1\!\odot\!\overline{\bR_w}, \bR_1 \rangle \\
	=& \max_{\substack{\left(\bLambda_0,\bLambda_1\right) \in \Omega_\costfunc^\kappa\\ \tilde{\bLambda}_0,\tilde{\bLambda}_1 \in \RM^{n}}} \;\; \langle \tilde{\bLambda}_0, \bR_0  \rangle + \langle \tilde{\bLambda}_1, \bR_1 \rangle \\
	& \quad\mbox{subject to} \quad\quad \tilde{\bLambda}_0 = \bLambda_0 \odot \overline{\bR_w}\\
	& \quad\quad\quad \qquad\qquad \tilde{\bLambda}_1 = \bLambda_1 \odot \overline{\bR_w}.
\end{align*}
In order to show that $\left(\bLambda_0,\bLambda_1\right) \in \Omega_\costfunc^\kappa$ implies $\left(\tilde{\bLambda}_0,\tilde{\bLambda}_1\right) \in \Omega_\costfunc^\kappa$, we note the following.
Assume that $\left(\bLambda_0,\bLambda_1\right) \in \Omega_\costfunc^\kappa$. Then, 
\begin{align} \label{eq:intermediate_result_multnoise}
	\Gamma^*(\bLambda_0)(\theta-\phi)\!+\!\Gamma^*(\bLambda_1)(\varphi-\phi)\!\leq\!\costfunc(\theta-\phi,\varphi-\phi)\!=\!\costfunc(\theta,\varphi) 
\end{align}
for all $\phi, \theta,\varphi \in \RT$, where the last equality follows from the assumption that $c$ is shift-invariant. Define
\begin{align*}
	\breve{\Phi}_w(\phi) = \Phi_w(-\phi)
\end{align*}
which satisfies $\int \breve{\Phi}_w(\phi)d\phi \leq 2\pi$ and $\breve{\Phi}_w(\phi)\geq 0$, $\forall \phi \in \RT$.
Multiplying both sides of \eqref{eq:intermediate_result_multnoise} with $\breve{\Phi}_w(\phi)$ and integrating with respect to $\phi$ then yields
\begin{align*}
	\frac{1}{2\pi}\left( \Gamma^*(\bLambda_0) * \breve{\Phi}_w \right)(\theta) + \frac{1}{2\pi}\left( \Gamma^*(\bLambda_1) * \breve{\Phi}_w \right)(\varphi) \leq \costfunc(\theta,\varphi)
\end{align*}
for all $\theta,\varphi \in \RT$, where both sides have been divided by $2\pi$. Similarly,
\begin{align}
	\frac{1}{2\pi}\left( \Gamma^*(\bLambda_0) * \breve{\Phi}_w \right)(\theta) \leq \kappa ,\; \frac{1}{2\pi}\left( \Gamma^*(\bLambda_1) * \breve{\Phi}_w \right)(\varphi) \leq \kappa
\end{align}
for all $\theta,\varphi \in \RT$. Further, for any $\theta \in \RT$, we have
\begin{align*}
	\frac{1}{2\pi}\!\left( \Gamma^*(\bLambda_0)\!*\!\breve{\Phi}_w \right)\!(\theta)\!&=\frac{1}{4\pi^2}\!\int_\RT \ba(\varphi)^H\bLambda_0\ba(\varphi)\breve{\Phi}_w(\theta-\varphi) d\varphi \\
	&=\frac{1}{4\pi^2}\tr\!\left(\!\bLambda_0\!\int_\RT\!\ba(\varphi)\ba(\varphi)^H\breve{\Phi}_w(\theta-\varphi) d\varphi\!\right) \\
	&=\frac{1}{4\pi^2}\langle \bLambda_0,  \int_\RT  \ba(\varphi)\ba(\varphi)^H\Phi_w(\varphi - \theta) d\varphi \rangle \\
	&=\frac{1}{4\pi^2}\langle \bLambda_0,  \int_\RT  \ba(\varphi)\ba(\varphi)^H\left(\Phi_w *\delta_\theta  \right)(\varphi)d\varphi \rangle \\
	&=\frac{1}{2\pi}\langle \bLambda_0, \Gamma\left(\Phi_w *\delta_\theta  \right) \rangle,
\end{align*}
where the last equality uses the definition of the operator $\Gamma(\cdot)$. As $\delta_\theta(\varphi)$ is a spectrum consistent with the rank-one covariance matrix $\bR_\theta = \frac{1}{2\pi}\ba(\theta)\ba(\theta)^H$, we have, by the properties of the Fourier transform, that $\left(\Phi_w *\delta_\theta  \right)(\varphi)$ is a spectrum consistent with the covariance matrix $2\pi\bR_w \odot \bR_\theta$, i.e.,
\begin{align*}
	\frac{1}{2\pi}\langle \bLambda_0, \Gamma\left(\Phi_w *\delta_\theta  \right) \rangle = \langle  \bLambda_0, \bR_w \odot \bR_\theta \rangle,
\end{align*}	
 and therefore
\begin{align*}
	\frac{1}{2\pi}\left( \Gamma^*(\bLambda_0) * \breve{\Phi}_w \right)(\theta) &= \langle  \bLambda_0,\bR_w \odot \bR_\theta \rangle \\
	&=\langle  \bLambda_0 \odot \overline{\bR_w}, \bR_\theta \rangle \\
	&= \langle  \tilde{\bLambda}_0, \bR_\theta \rangle \\
	&=\langle  \tilde{\bLambda}_0, \frac{1}{2\pi}\ba(\theta)\ba(\theta)^H \rangle \\
	&= \frac{1}{2\pi}\ba(\theta)^H\tilde{\bLambda}_0 \ba(\theta) \\
	&= \Gamma^*(\tilde{\bLambda}_0)(\theta),
\end{align*}
where the third equality follows from the definition of $\tilde{\bLambda}_0$ and the last from the definition of $\Gamma^*(\cdot)$. Using the same reasoning for $\bLambda_1$, we thus have that $\left(\bLambda_0,\bLambda_1\right) \in \Omega_\costfunc^\kappa$ implies \mbox{$\left(\tilde{\bLambda}_0,\tilde{\bLambda}_1\right) \in \Omega_\costfunc^\kappa$}. Combined, this yields 
{
\small{
\begin{align*}
	\Omtcov_\kappa(\bR_0',\bR_1')\leq \max_{\left(\tilde{\bLambda}_0,\tilde{\bLambda}_1\right) \in \Omega_c^\kappa}  \langle \tilde{\bLambda}_0, \bR_0  \rangle + \langle \tilde{\bLambda}_1, \bR_1 \rangle \; = \Omtcov_\kappa(\bR_0,\bR_1).
\end{align*}
}
}
\hfill $\square$
\bibliographystyle{IEEEbib}
\bibliography{ElvanderJK18_manuscript_arxiv2.bbl,IEEEabrv}

\end{document}